\newcommand{\commentout}[1]{}
\newtheorem{thm}{Theorem}[section]
\newtheorem{prop}[thm]{Proposition}
\newtheorem{rmk}[thm]{Remark}
\newcommand{\nwc}{\newcommand*}
\nwc{\ben}{\begin{equation*}}
\nwc{\bea}{\begin{eqnarray}}
\nwc{\beq}{\begin{eqnarray}}
\nwc{\bean}{\begin{eqnarray*}}
\nwc{\beqn}{\begin{eqnarray*}}
\nwc{\beqast}{\begin{eqnarray*}}
\nwc{\eal}{\end{align}}
\nwc{\een}{\end{equation*}}
\nwc{\eea}{\end{eqnarray}}
\nwc{\eeq}{\end{eqnarray}}
\nwc{\eean}{\end{eqnarray*}}
\nwc{\eeqn}{\end{eqnarray*}}
\newtheorem{theorem}{Theorem}[section]
\theoremstyle{remark}
\nwc{\nn}{\nonumber}
\nwc{\mb}{\mathbf}
\nwc{\ml}{\mathcal}
\newcommand{\lt}{\left}
\newcommand{\rt}{\right}
\nwc{\lb}{\llbracket}
\nwc{\rb}{\rrbracket}
\nwc{\bc}{{\mb c}}
\nwc{\bj}{{\mb j}}
\nwc{\mbh}{{h}}
\newcommand{\Conv}{\mathop{\scalebox{1.5}{\raisebox{-0.2ex}{$\ast$}}}}
\nwc{\vep}{\varepsilon}
\nwc{\ep}{\epsilon}
\nwc{\vrho}{\varrho}
\nwc{\orho}{\bar\varrho}
\nwc{\vpsi}{\varpsi}
\nwc{\lamb}{\lambda}
\nwc{\om}{\omega}
\nwc{\Om}{\Omega}
\nwc{\al}{\alpha}
\nwc{\IA}{\mathbb{A}} 
\nwc{\bi}{\mathbf i}
\nwc{\bo}{\mathbf o}
\nwc{\IB}{\mathbb{B}}
\nwc{\IC}{\mathbb{C}} 
\nwc{\ID}{\mathbb{D}} 
\nwc{\IM}{\mathbb{M}} 
\nwc{\IP}{\mathbb{P}} 
\nwc{\II}{\mathbb{I}} 
\nwc{\IE}{\mathbb{E}} 
\nwc{\IF}{\mathbb{F}} 
\nwc{\IG}{\mathbb{G}} 
\nwc{\IN}{\mathbb{N}} 
\nwc{\IQ}{\mathbb{Q}} 
\nwc{\IR}{\mathbb{R}} 
\nwc{\IT}{\mathbb{T}} 
\nwc{\IZ}{\mathbb{Z}} 
\nwc{\cE}{{\ml E}}
\nwc{\cP}{{\ml P}}
\nwc{\cQ}{{\ml Q}}
\nwc{\cL}{{\ml L}}
\nwc{\cX}{{\ml X}}
\nwc{\cW}{{\ml W}}
\nwc{\cZ}{{\ml Z}}
\nwc{\cR}{{\ml R}}
\nwc{\cV}{{\ml V}}
\nwc{\cT}{{\ml T}}
\nwc{\crV}{{\ml L}_{(\delta,\rho)}}
\nwc{\cC}{{\ml C}}
\nwc{\cO}{{\ml O}}
\nwc{\cA}{{\ml A}}
\nwc{\cK}{{\ml K}}
\nwc{\cB}{{\ml B}}
\nwc{\cD}{{\ml D}}
\nwc{\cF}{{\ml F}}
\nwc{\cS}{{\ml S}}
\nwc{\cM}{{\ml M}}
\nwc{\cG}{{\ml G}}
\nwc{\cH}{{\ml H}}
\nwc{\bk}{{\mb k}}
\nwc{\bn}{{\mb n}}
\nwc{\bz}{\mb z}
\nwc{\by}{\mathbf{h}}
\nwc{\bZ}{\mathbf{Z}}
\nwc{\bF}{\mathbf{F}}
\nwc{\bE}{\mathbf{E}}
\nwc{\bV}{\mathbf{V}}
\nwc{\bY}{\mathbf Y}
\nwc{\br}{\mb r}
\nwc{\pft}{\cF^{-1}_2}
\nwc{\bU}{{\mb U}}
\nwc{\bG}{{\mb G}}
\nwc{\bg}{\mathbf{g}}
\nwc{\mbf}{\mathbf{f}}
\nwc{\mbe}{\mathbf{e}}
\nwc{\be}{\mathbf{e}}
\nwc{\ind}{\operatorname{I}}
\nwc{\mbx}{\mathbf{f}}
\nwc{\bb}{\mathbf{g}}
\nwc{\xmax}{f_{\rm max}}
\nwc{\xmin}{f_{\rm min}}
\nwc{\suppx}{\hbox{\rm supp} (\mbf)}
\nwc{\cI}{\IZ^2_N}
\nwc{\chis}{{\chi^{\rm s}}}
\nwc{\chii}{{\chi^{\rm i}}}
\nwc{\pdfi}{{f^{\rm i}}}
\nwc{\pdfs}{{f^{\rm s}}}
\nwc{\pdfii}{{f_1^{\rm i}}}
\nwc{\pdfsi}{{f_1^{\rm s}}}
\nwc{\thetatil}{{\tilde\theta}}
\nwc{\red}{\color{red}}
\nwc{\blue}{\color{blue}}
\nwc{\prox}{\hbox{prox}}
\nwc{\diag}{\hbox{\rm diag}}
\nwc{\supp}{{\hbox{\rm supp}}}
\nwc{\sloc}{J_{\rm f}}
\nwc{\bu}{\xi}
\nwc{\bv}{\eta}
\nwc{\cU}{\mathcal{U}}
\nwc{\cN}{\mathcal{N}}
\nwc{\bN}{\mathbf{N}}
\nwc{\mbm}{\mathbf{m}}
\nwc{\bw}{\mathbf{w}}
\nwc{\im}{i}
\nwc{\bom}{\mathbf{w}}
\nwc{\bt}{\mathbf{t}}
\nwc{\z}{y}
\nwc{\cY}{\mathcal{Y}}
\nwc{\bM}{\mathbf{M}}
\nwc{\half}{{1\over 2}}
\nwc{\Sf}{S_{\rm f}}
\nwc{\Jf}{J_{\rm f}}
\nwc{\nul}{\hbox{\rm null}_\IR}
\nwc{\spanR}{\hbox{\rm span}_\IR}
\nwc{\Arg}{\hbox{\rm Arg~}}
\nwc{\fdr}{S_{\rm f}}
\nwc{\phase}[1]{\exp\lt[i\measured #1\rt]}
\nwc{\xnul}{x_{\rm null}}
\begin{document}

\title{Noise-Robust One-Bit Diffraction Tomography and Optimal Dose Fractionation}

\author{Pengwen Chen}
\thanks{Department of Mathematics, National Tsing Hua University,
Hsinchu 30013, Taiwan}
\author{Albert Fannjiang}
\thanks{Department of Mathematics, University of California, Davis, CA 95616, USA}

\maketitle
\begin{abstract}

This study presents a noise-robust framework for 1-bit diffraction tomography, a novel imaging approach that relies on intensity-only binary measurements obtained through coded apertures. The proposed reconstruction scheme leverages random matrix theory and iterative algorithms to effectively recover 3D object structures under high-noise conditions. 

A key contribution is the numerical investigation  of dose fractionation, revealing optimal performance at a signal-to-noise ratio near 1, {\em independent of the total dose}.  This finding addresses the question: How to distribute a given level of total radiation energy among different tomographic views  in order to  optimize the quality of reconstruction? 



\end{abstract}

\section{Introduction}
\begin{figure}[t]
\centering
{\includegraphics[width=14cm]{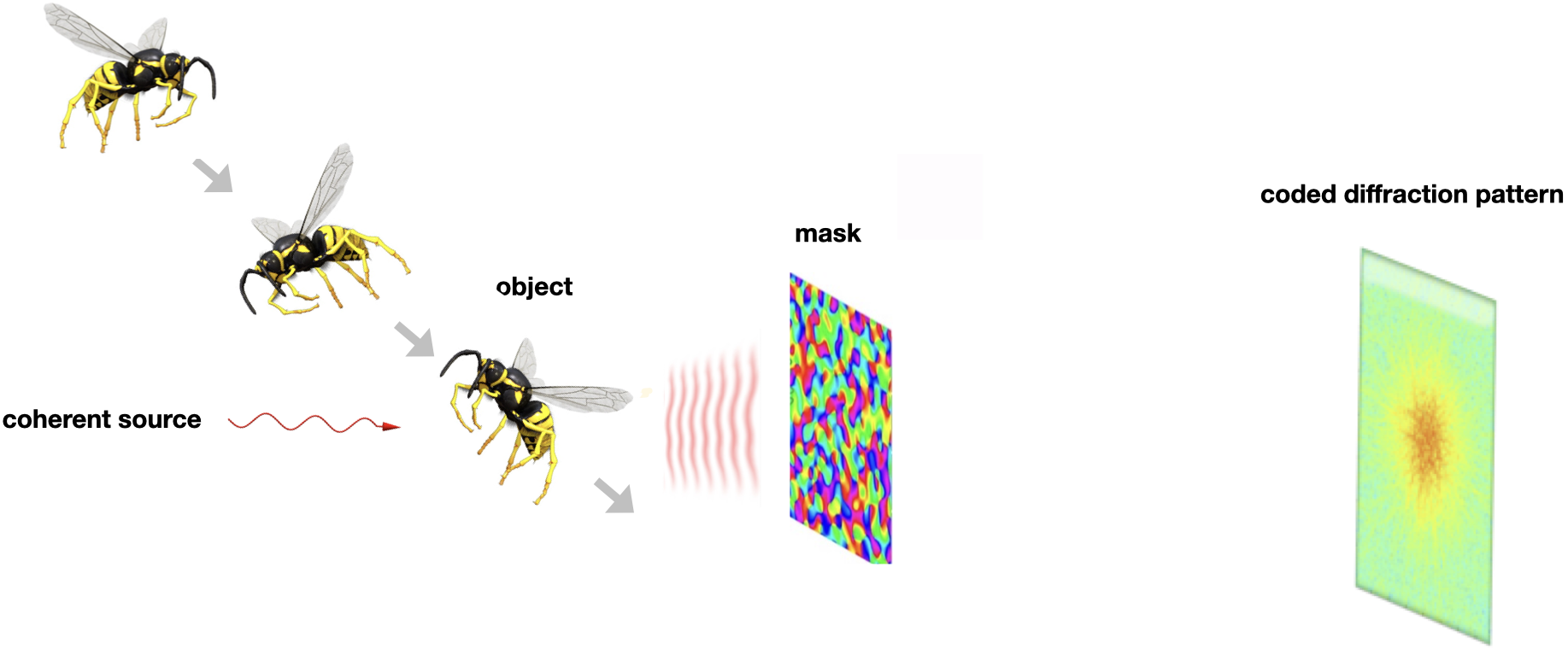}}
\caption{Coded aperture diffraction tomography:  Diffraction patterns of an object in various orientations are measured with the same random mask (see Section \ref{sec:forward})}. 
\label{fig1}
\end{figure}

Diffraction tomography is a distinct variant of tomographic imaging techniques that predominantly employs wave diffraction, as opposed to absorption, as the fundamental mode of object-wave interaction. At the heart of diffraction tomography is the goal to uncover the object's interior structure by acquiring scattered wavefield data from various orientations around the object. Unlike absorption-based methods, in this technique, both the phase and amplitude of the diffracted waves provide vital information about the object's internal structure.
This technique is particularly useful in areas such as non-destructive testing, biomedical imaging, geophysics, and more, where the wavelength of the probing wave is on par with the dimensions of the object or inhomogeneity in the medium, resulting in diffraction phenomena. 

In a traditional diffraction tomography setup, the complex-valued scattered wavefield, containing both magnitude and phase information, would be measured. However, in the case of intensity-only measurement, only the magnitude of the scattered wavefield is measured. This approach simplifies the detection process as it circumvents the challenge of phase measurement, which often requires complex and precise instruments like lenses and  interferometers, especially in high-frequency settings. 

 Further down the ladder of measurement complexity is diffraction tomography with threshold-crossing intensity-only measurements where a binary representation of the diffracted wave's intensity inherently simplifies the data acquisition and processing. A certain threshold intensity level is predefined, and the sensor only records whether the intensity of the scattered wave is above or below this threshold. 
 
This binary data is simpler to collect and process, and less sensitive to noise compared to full-waveform data. However, such thresholding inevitably leads to a loss of information about the object, making the subsequent image reconstruction process more challenging.  Traditional reconstruction techniques, like filtered backprojection or Radon inversion methods, which are suitable for high-precision measurements, may not work effectively for 1-bit intensity-only measurements due to their binary nature and the absence of phase information.

To set a reference point for the present work, let us briefly recall some key advances and insights  in the  related area of signal reconstruction of  two-dimensional  complex band-limited signals from  {\em threshold crossings in the real and imaginary parts}.  
On the one hand,
a band-limited signal whose entire extension is irreducible is uniquely
determined, up to some constant factor, by the sign information (real zero-crossings) of its real and imaginary parts, requiring essentially 2-bit information of the samples. On the other hand, this reconstruction requires extreme accuracy in identifying zero crossings \cite{Curtis87,zero89,zero90}. Furthermore, the choice of the threshold level can significantly impact the quality of the reconstruction, making it a crucial factor to consider in the design of such systems.

Therefore, in view of this instability in signal reconstruction from {\em approximate} information of zero-crossings,  the  task of tomographic phase retrieval with 1-bit threshold crossing  is quite untenable. 

A key component of our approach to mitigating the obstacle is a  randomly coded aperture \cite{unique} resulting   in coded diffraction patterns of greater diversity. The idea is motived by a Gaussian random matrix theorem  proved in \cite{null}.

The second  problem addressed in the present work is about optimal dose fractionation. 

 According to Henderson \cite{Henderson}, ``Radiation damage is the main problem which prevents the determination of the
structure of a single biological macromolecule at atomic resolution using any kind
of microscopy. This is true whether neutrons, electrons or X-rays are used as the
illumination." 

In other words, when it comes to biomedical and single molecule imaging, the resolution is damage-limited, instead of diffraction-limited. 
 On the other hand, the \emph{dose–fractionation theorem} of Hegerl--Hoppe \cite{Hegerl} set the floor for the minimum required dose and motivates the idea of dose fractionation to overcome the damage limit on resolution. 
 
In particular, the minimum dose $D$  for resolution $d$ satisfies  the dose-resolution scaling
laws -- $D\!\propto\!d^{-3}$ for incoherent and $D\!\propto\!d^{-4}$ for
coherent elastic scattering,\cite{dose-scaling04, dose-scaling09}.  Both results assume that every illumination view (or
“tilt”) is acquired with the \emph{same} fluence and that each detector
record retains full photon counts.

Despite these advances, a 
clear guideline for the best  strategy of dose fractionation is yet to emerge. 
The dose-resolution scaling laws dictate \emph{how much} dose is needed
to reach a given voxel size, but remain silent on \emph{how to distribute}
that dose across projections, especially  when the data are severely quantized:
\begin{quote}
\it Given a fixed dose budget $D$, how should one split it into
$m$ projection views (each with mean fluence $s=D/m$) so that a
\textbf{non‑linear}, 1-bit reconstruction optimizes  image quality?
\end{quote}

Because one‑bit measurements discard amplitude information above (or
below) a threshold, the marginal information gained from extra photons
in any single detector pixel rapidly saturates.  As we show below, that
saturation shifts the optimum towards many low‑dose views rather than a
few high‑dose ones, and the balance point occurs at
$
\mathrm{SNR}\;\approx\;1,
$
 independent of the total dose.

\subsection{Plan and contribution}

First, we discuss the random matrix theorem motivating the present work and the noise reduction mechanism in Section \ref{sec2}. 

We then describe the forward scattering model in Section \ref{sec:forward} and its discrete version in Appendix A.

Next in Section \ref{sec3} we discuss the power method and the shifted inverse power method for reconstruction with the discrete framework of tomography which is amenable to information-theoretical analysis as well as exact simulations 
\cite{Born-tomo,3D-phasing} (see Appendix  \ref{sec:discrete}). 

In Section \ref{sec:nsr}, we define the Noise-to-Signal ratio (NSR) for 3D tomographic phase retrieval with the Poisson noise.
 
In Section \ref{sec:num1} we discuss our tomographic sampling scheme and, in particular, the selection of threshold for noisy data.  We show that  with a relatively large number of diffraction patterns the power method is more 
advantageous while with a small to moderate number of diffraction patterns,
the shifted inverse power method converges much faster and is more stable.  

In Section \ref{sec:num2}, we  conduct simulation with dose fractionation, culminating in the numerical evidence for the optimal  dose fractionation characterized by SNR=1. 

Finally, in Section \ref{sec:final}, we discuss the practical significance of using 2-phase and 4-phase random masks and propose future research into the relation between optimal fractionation of dose and the nature of noise. 
\section{Random-matrix theorem}\label{sec2}

Consider the
nonlinear signal model: $b=|\cA f_*|$, where  $ \cA\in \IC^{M\times N}$ is the measurement matrix and $|\cdot|$ denotes entrywise
modulus. We select a threshold  to separate the ``weak"
 signals, due to destructive interference,  from the ``strong" signals,  due to  constructive interference, as follows. 
Let $I\subset \{1,\cdots,N\}$ be the support set of the weak signals (to be determined) and $I_c$ its complement such that $b[i]\leq b[j]$ for all $i\in I, j\in I_c$. Denote  the sub-row  matrices  with row indices in $I$ and $ I_c$  by $\cA_I$ and $\cA_{I_c}$, respectively.

The significance of the weak signal support $I$ lies in the fact that $I$ contains the
 loci  least sensitive to the phase information. This motivates the least squares problem: 
 \beq
\label{nul3}
\min\lt\{\|\cA_I f\|^2: f\in \IC^N, {\|f\|=\|f_*\|}\rt\}. 
 \eeq
 
A slightly simplified version of the theorem proved in \cite{null} is the following. 
\begin{theorem}\label{thm0} \cite{null}
Let $\cA$ be an $M\times N$ i.i.d. complex Gaussian matrix and $f_{\min}$ a minimizer of \eqref{nul3}.  Suppose  
\begin{eqnarray}
N< |I|\ll M\ll |I|^2.\label{scaling}
\end{eqnarray}
 Then  with an overwhelming  probability,  the relative error bound 
\beq
\label{error}
\|f_*\overline{f_*^T} -f_{\min}\overline{f^T_{\min}}\|_{\rm F}/\|f_*\|^2
&\le& c_0\sqrt{{|I|/M}} \ll 1
 \eeq
holds for some constant $c_0$, where $\|\cdot\|_{\rm F}$ denotes the Frobenius norm.
Here and below, the over-line notation denotes
the complex conjugation.

\end{theorem}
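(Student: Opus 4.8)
The plan is to exploit the rotational invariance of the complex Gaussian to decompose $\cA$ into the component aligned with $f_*$ and the component orthogonal to it, and then to play the smallness of the selected ``weak'' rows against the good conditioning of the orthogonal component. Normalize $\|f_*\|=1$; this is harmless because the minimizer $f_{\min}$ and both sides of \eqref{error} are invariant under $f_*\mapsto cf_*$. Write
\[
\cA=u\,\overline{f_*^T}+B,\qquad u:=\cA f_*\in\IC^M,\quad B:=\cA\bigl(\mathrm{Id}-f_*\overline{f_*^T}\bigr),
\]
so that $Bf_*=0$ and $B$ acts only on $f_*^{\perp}\cong\IC^{N-1}$. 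For complex Gaussian $\cA$, the vector $u$ and the matrix $B$ are independent, $u$ has i.i.d.\ $\mathcal{CN}(0,1)$ entries, and conditionally on $u$ the restriction of $B$ to $f_*^{\perp}$ is still an i.i.d.\ complex Gaussian matrix. Since the weak-signal index set $I$ is a function of $|u[1]|,\dots,|u[M]|$ alone, conditioning on $u$ freezes $I$ while leaving $B_I$ an $|I|\times(N-1)$ i.i.d.\ complex Gaussian. For any feasible $f=\alpha f_*+g$ with $g\perp f_*$ and $|\alpha|^2+\|g\|^2=1$ we get $\cA_I f=\alpha\,u_I+B_I g$, and in particular $\cA_I f_*=u_I$, so feasibility of $f_*$ gives $\|\cA_I f_{\min}\|\le\|u_I\|$. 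Writing $f_{\min}=\alpha f_*+g_{\min}$, the theorem reduces to proving $\|g_{\min}\|\lesssim\sqrt{|I|/M}$.

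Two estimates, each valid with overwhelming probability under \eqref{scaling}, do the work. First, the $|u[i]|^2$ are i.i.d.\ Exponential$(1)$ and $I$ picks out their $|I|$ smallest values, so $\|u_I\|^2=\sum_{i\in I}|u[i]|^2$ is a sum of the first $|I|$ order statistics of $M$ exponentials. By the R\'enyi representation it equals in law $\sum_{j=1}^{|I|}\frac{|I|-j+1}{M-j+1}\,\xi_j$ with $\xi_j$ i.i.d.\ Exponential$(1)$; its mean is of order $|I|^2/(2M)$ and all weights are of order at most $|I|/M$, so a sub-exponential Bernstein bound yields $\|u_I\|^2\le C\,|I|^2/M$, i.e.\ $\|u_I\|\lesssim|I|/\sqrt M$, with probability at least $1-e^{-c|I|}$. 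Second, $B_I$ is an $|I|\times(N-1)$ i.i.d.\ complex Gaussian matrix with $N-1<|I|$, so Davidson--Szarek/Gordon-type deviation inequalities (in their complex form) give $\sigma_{\min}(B_I)\ge\sqrt{|I|}-\sqrt{N-1}-t\ge c_1\sqrt{|I|}$ with overwhelming probability, provided $N/|I|$ is bounded away from $1$ --- this is the role of $N<|I|$ in \eqref{scaling} (the sharp version of the bound wants $N=o(|I|)$). Thus $\|B_I g\|\ge c_1\sqrt{|I|}\,\|g\|$ for every $g\in f_*^{\perp}$.

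On the intersection of these two events, $B_I g_{\min}=\cA_I f_{\min}-\alpha\,u_I$ gives, using $|\alpha|\le\|f_{\min}\|=1$,
\[
c_1\sqrt{|I|}\,\|g_{\min}\|\le\|B_I g_{\min}\|\le\|\cA_I f_{\min}\|+\|u_I\|\le 2\|u_I\|\lesssim|I|/\sqrt M,
\]
hence $\|g_{\min}\|\le c_2\sqrt{|I|/M}$. Finally, choosing the global phase of $f_{\min}$ so that $\alpha\ge 0$ and expanding
\[
f_{\min}\overline{f_{\min}^T}-f_*\overline{f_*^T}=\alpha\bigl(f_*\overline{g_{\min}^T}+g_{\min}\overline{f_*^T}\bigr)+g_{\min}\overline{g_{\min}^T}-\|g_{\min}\|^2 f_*\overline{f_*^T},
\]
the triangle inequality in Frobenius norm together with $\alpha\le 1$ and $\|g_{\min}\|\le 1$ gives $\|f_*\overline{f_*^T}-f_{\min}\overline{f_{\min}^T}\|_{\rm F}\le 2\|g_{\min}\|+2\|g_{\min}\|^2\le 4\|g_{\min}\|\le c_0\sqrt{|I|/M}$, which is $\ll 1$ since $|I|\ll M$.

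I expect the main obstacle to be the first concentration estimate: obtaining an \emph{overwhelming-probability} (not merely in-probability) bound of the correct order $|I|^2/M$ for the sum of the $|I|$ smallest of $M$ i.i.d.\ exponentials, where one must track the interplay of the weights in the R\'enyi representation with the scaling \eqref{scaling}. The remaining ingredients --- the Gaussian orthogonal splitting, the conditioning argument that keeps $B_I$ Gaussian after the data-dependent selection of $I$, and the least-singular-value bound for a tall Gaussian matrix --- are standard.
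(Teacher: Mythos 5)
The paper you were given states Theorem \ref{thm0} only as a citation to \cite{null} and contains no proof of it, so there is no internal argument to compare against; judged on its own terms, your proof is essentially correct, and it follows what is, to the best of my knowledge, the same strategy as the cited source. The three load-bearing ingredients are all present and correctly deployed: (i) the splitting $\cA=u\overline{f_*^T}+B$ with $u=\cA f_*$ independent of $B$, together with the observation that $I$ is a function of $|u|$ alone --- this is the step a blind attempt most often fumbles, since it is exactly what legitimizes treating $B_I$ as a fresh $|I|\times(N-1)$ Gaussian despite the data-dependent row selection; (ii) the bound $\|u_I\|^2\lesssim |I|^2/M$ for the sum of the $|I|$ smallest of $M$ exponentials, which, contrary to the worry in your last paragraph, is not delicate: taking the deviation $t$ of the order of the mean $|I|^2/(2M)$ makes both branches of the Bernstein exponent of order $|I|$, hence overwhelming under \eqref{scaling}; (iii) the lower bound $\sigma_{\min}(B_I)\gtrsim\sqrt{|I|}$, combined with the variational inequality $\|\cA_I f_{\min}\|\le\|u_I\|$. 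Two caveats, both reflecting looseness in the simplified statement rather than gaps in your argument: the hypothesis $N<|I|$ must indeed be read as $N/|I|$ bounded away from $1$ for the quantitative singular-value estimate, as you note; and your argument nowhere uses $M\ll|I|^2$, a condition inherited from the unsimplified theorem of \cite{null} whose error bound carries an additional term suppressed in \eqref{error} (this is why the remark following the theorem advertises $|I|\sim\sqrt{MN}$ as the optimal balance, which would be pointless if \eqref{error} were the whole story).
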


\begin{rmk}
In physical terms, the random matrix $\cA$ represent diverse, independent measurements. Since the intensity is a quadratic quantity, the weak-signal rows of $\cA$ are the place least sensitive to phase information. 

The scaling condition \eqref{scaling} dictates how to properly sort the signals into the weak and strong ones: First, the weak pixels must be more numerous than the object dimension so that the problem \eqref{nul3} does not have multiple linearly independent solutions; Indeed, the number of weak pixels should be much greater than the square-root of the total number of measurement data. Finally,  a good  scaling for selecting the weak pixels according to the error scaling in \eqref{error} is: 
\beq
\label{good}
|I|\sim \sqrt{MN}= N\sqrt{L}\quad\hbox{where $L=M/N\gg 1$ is the oversampling ratio}. 
\eeq

This physical picture naturally lands itself on the problem of  tomographic phase retrieval with a randomly coded aperture where the coded aperture enhances measurement randomness and the tomography modality provides measurement diversity (Section \ref{sec:forward}). 

In a nutshell, the strategy is this: Take a bunch of 1-bit diffraction frames, keep only the “weak’’ pixels in each, and find the object vector that drives those weak pixels as close to zero as possible.  As long as we record enough frames, that simple least-squares step under the norm constraint already lands us near the true object.
\end{rmk}
In practice, it is convenient to consider the following surrogate:
 \beq
\label{nul3''}
f_{\max}:=\arg{\max\lt\{\|\omega\odot \cA f\|^2: f\in \cX, {\|f\|=\|f_*\|}\rt\}}
 \eeq
 where  $\omega$ is the indicator vector for $I_c.$
 The asymptotical equivalence between \eqref{nul3} and \eqref{nul3''}  as $M\to \infty$  can be seen as follows.
 
As $M\to\infty$, the column vectors of $\cA=[a_{ij}]$ have nearly the same norm $M^{1/2}$
 (assume unit variance for each entry $a_{ij}$) and are nearly mutually orthogonal in the sense that
\[
M^{-1}{\sum_{i=1}^M \bar a_{ij} a_{ik}}\sim M^{-1/2}\to 0,\quad j\neq k. 
\]
In other words, we can think of $M^{-1/2}\cA$ as an isometry when $M$ is much larger than $N$. 
By the isometry property \beq
 \label{isom'}
\|f\|^2= M^{-1}\|\cA_I f \|^2+M^{-1}\|\cA_{I_c} f \|^2, 
\eeq
minimizing $ \|\cA_I f \|^2$ is equivalent to maximizing $\|\cA_{I_c}f\|^2$
over $\{f:\|f\|=\|f_*\|\}$.

\subsection{Noise robustness}\label{sec:robust}

At any given noise level,  the $I$-versus-$I_c$ membership of the indices near 
the threshold are least robust to noise while the membership of the extreme
(very strong or very weak) indices are most robust to noise.

We want to show that  these robust  indices and the corresponding row vectors also play
the strongest role in the synthesis of $f_{\rm max}$ and $f_*$.

Let $a^T_j$ denote the $j$-th row  vector of $\cA$.  We can write 
\beq
\cA^* (\om \odot \cA)= \sum_{j\in I_c} \overline{a_j} a_j^T,\label{eq5}
\eeq
the sum of rank-one projections restricted to $I_c$. 

By Theorem \ref{thm0},  the leading eigenvector  of the reduced Gram  matrix \eqref{eq5} approximates 
the true object $f_*$, so we  have 
\beq
f_*\sim\sum_{j\in I_c} \overline{a_j} (a_j^T f_*)\label{eq6}
\eeq
which can be interpreted as a linear decomposition of $f_*$ into its features
$\{\overline{a_j}:  j\in I_c\}$ with coefficients $a_j^T f_*$. As expected, the larger 
the {noiseless} data  $|a_j^T f_*|$, the more significant the corresponding feature $\overline{a_j}$ in the synthesis of $f_*$. Therein lies the de-noising mechanism of the one-bit scheme.

Before moving unto  the scattering model, let us briefly comment on how this paper is related to the literature on 1-bit compressed sensing such as \cite{1-bit-cs1,1-bit-cs2}.

While  Ref. \cite{1-bit-cs1,1-bit-cs2} assume thresholded linear measurement represented by random matrices followed by extracting sign information,  we are interested in thresholded nonlinear phase-less measurement for which the selection of threshold is a key. Indeed, our focus will be  to establish the numerical feasibility of tomographic phase retrieval with structured, properly thresholded 1-bit measurements described below.

\section{Forward scattering model}\label{sec:forward}

To facilitate  the implications of Theorem \ref{thm0}, we introduce a randomly coded aperture (i.e. mask) behind the object as shown in Figure \ref{fig1}, resulting in the following forward model. 

Suppose that  $z$ is  the optical axis in which the incident plane wave $e^{\im \kappa z}$ of wavenumber $\kappa$ propagates. 
 For a quasi-monochromatic wave field $u$ such as coherent X-rays and electron waves, it is useful to write  $u=e^{\im\kappa z} v$ to factor out  the incident wave and focus on the modulation field (i.e. the envelope), described by $v$.
 
 In view of Figure \ref{fig1},  we  now break  up the forward model into two components: First, a large Fresnel number regime from the entrance pupil to  the exit plane; Second, a small Fresnel number regime  from  the exit plane  to the detector  plane. 
 
 For the first component, we have
 \beq
v(\br)&=&e^{\im\kappa\psi(\br)} \label{rytov},\\
\label{wrap}\psi(x,y,z)&=&\int_{-\infty}^zf(x,y,z')dz'  
\eeq
\cite{3D-phasing}. 
The exit wave is given by $u=e^{\im\kappa z} v$  evaluated at the object's rear boundary (say,  $z=0$). At $z=+\infty$,  \eqref{wrap} is called  the {\em ray transform} (a.k.a. {\em projection}) of the object 
$f$ in the $z$ direction.

After passing through the object and the mask $\mu$ immediately behind, the exit wave  \eqref{rytov}-\eqref{wrap}
becomes  the masked exit wave $\mu e^{\im  \kappa \psi}$ at the exit plane $z=0$ and then undergoes Fraunhofer diffraction for $z>0$. 
The intensities of $u$ are then given by 
 \beq
|u(x,y,L)|^2=|v(x,y,L)|^2\sim |\cF[\mu \odot e^{\im  \kappa \psi} ]|^2.\label{data}
\eeq

To avoid the phase unwrapping problem associated with \eqref{rytov} we  apply the {\em weak-phase-object approximation}  (a.k.a. first-order Born approximation)
\beq
\label{Born}
e^{\im\kappa\psi}&\approx &1+\im\kappa\psi.
\eeq
At the second state, the scattered component  $v_B:=\im\kappa\psi$ is 
 first multiplied by the mask function  $\mu$  and then propagates into the far-field as $\cF (\mu \odot  v_B)$ where $\cF$ is the Fourier transform in the transverse variables.  This is the dark-field  imaging modal \cite{FPT,dark12,dark15}. The resulting coded diffraction pattern is  $|\cF (\mu \odot v_B)|^2$.

The next key ingredient to an information-based approach is discretization. 
The discrete version of forward scattering model is given in Appendix A.

\section{Power iteration and inverse power iteration}\label{sec3}

Introducing the projection $\cP_\cX$ unto the object space $\cX$, we can rewrite \eqref{nul3''} as  \beq
\label{nul3'}
\arg\max\lt\{\langle g, \cP_\cX \cA^* \diag(\omega)\cA \cP_\cX g\rangle: g\in \IC^N, {\|g\|=\|f_*\|}\rt\}
 \eeq
where we have used the fact that $\omega^2=\omega$. Here we assume 
that $\cX$ is a linear vector space resulting from, e.g. a more restrictive support constraint due to zero-padding in the mathematical set-up. 

Viewed as the Rayleigh quotient for the leading eigenvector(s) of the positive semidefinite matrix
 $\cP_\cX \cA^* \diag(\omega)\cA \cP_\cX$, \eqref{nul3'} suggests  
 the power iteration for solution 
 \beq
 \label{null-algorithm}
g^{(k+1)}= \lt(
\cP_\cX
\cA^* \diag(\omega)\cA \cP_\cX\rt)^k  g^{(1)}\|f_*\|/\| \lt(\cP_\cX \cA^* \diag(\omega)\cA \cP_\cX\rt)^k  g^{(1)}\|,
  \eeq
  for $k\in \IN$. 

Replace  the constraint $\|f\|=\| f_*\|$  in \eqref{nul3''}  with $\| \cA f\|=\|b\|$.
An alternative formulation can be developed in terms of  the  transform  domain variables $z$ as follows.
  Let $\cA^\dagger$ be  the pseudo-inverse of $\cA$ in the  object space $\cX$ 
and  $\cP=\cA\cA^\dagger $ the orthogonal projection onto  the space $\cA \cX.$

Replacing $\cA f$ by $z$ and $f$ by $\cA^\dagger z$,  we can formulate \eqref{nul3''} as
the following optimization problem
\beq\label{fourier-opt}
\lefteqn{\arg\max\lt\{\|\diag(\omega) z\|^2, z\in \cA\cX,  \|z\|=\|b\|\rt\}}\\
&=& \arg\max\lt\{ \|\diag(\omega)\cP z\|^2: z\in \IC^M, {\|z\|=\|b\|}\rt\}\nn\\
&=&\arg\max_{\|z\|=\|b\|} 
\langle z, \cP_\om z \rangle ,\nn\quad  \cP_\omega:=\cP\diag(\omega)\cP\nn
\eeq
which is the Rayleigh quotient for $\cP_\om$'s leading eigenvector. This leads to the power iteration, 
\beq\label{power} 
 z^{(k+1)}&=&\cP_\om z^{(k)}\|b\|/\| z^{(k)}\|,\quad k\in \IN.
\eeq
Using  $f^{(k)}=\cA^\dagger z^{(k)}$ and $\cA^\dagger = \cA^\dagger  \cA \cA^\dagger$, we have
\beq\label{power_f} 
 f^{(k+1)}&=& \cA^\dagger \diag(\omega) \cA  f^{(k)} \|b\|/\|   \cA  f^{(k)}  \|,\quad k\in \IN.
\eeq
See 
  Algorithm ~\ref{alg:Fourier-null}.

\begin{algorithm}
\SetKwFunction{Round}{Round}
\textbf{Input: }  The indicator vector $\omega$ for $I_c$;
\\
\textbf{Random initialization:} $\mbf^{(1)}=\mbf_{\rm rand}$, $z^{(1)}=\cA \mbf^{(1)}$.
\\
\textbf{Loop:}\\
\For{$k=1:k_{\textup{max}}-1$}
{
${\mbf^{(k+1)}}\leftarrow  \cA^\dagger (\omega \odot z^{(k) })$;\\
$z^{(k+1)}\leftarrow  \cA \mbf^{(k+1)}\|b\|/\|\cA \mbf^{(k+1)}\|$
}
{\bf Output:} {$ \mbf^{k_{\textup{max}}}$.}
\caption{\textbf{The Power Method}}
\label{alg:Fourier-null}
\end{algorithm}

\subsection{Shifted inverse power iteration} \label{sec:1.1}
The power iteration (\ref{power_f}) can be written as 
\beq\label{eq8'}
\lambda^{(k+1)} \cR^* \cR f^{(k+1)}=
\cR^* \cQ^*\diag(\omega) \cQ\cR f^{(k)},\; \lambda^{(k+1)}:=\|\cA f^{(k)}\|/\|b\|,
\eeq
which  computes the dominant  eigenvector
$\mbf_{\max}$ 
 of 
\beq
\cA^\dagger \diag(\omega) \cA=(\cR^*\cR)^{-1} \cR^* \cQ^*  \diag(\omega) \cQ\cR.
\eeq 
Hence, for some $\lambda$ we have
 \beq\label{eq224}
\lambda  \cR^* \cR \mbf_{\max}=  \cR^* \cQ^* \diag(\omega) \cQ\cR \mbf_{\max}.
\eeq
Let
 \beq
\label{AB}  \cS_{\omega} :=\cR^* \cQ^* \diag(\omega) \cQ\cR,\; \cS :=\cR^* \cQ^* \cQ  \cR=\cR^*\cR.\eeq 
Then $\lambda$ is 
 a generalized eigenvalue of the symmetric-definite pair $\{\cS_{\omega} , \cS \}$,
with the corresponding generalized eigenvector 
\beq
\mbf_{\max}= \arg\max_\mbf \frac{\langle \mbf, \cS_{\omega} \mbf \rangle}{\langle \mbf, \cS \mbf \rangle}
\eeq
where  $\mbf$ must be further restricted to the range of $\cS $ if $\cS $ is singular.
Assume that  $\cS $ is nonsingular for simplicity. 
Since both $\cS_{\omega} ,\cS $ are positive semi-definite and
 $\cS_{\omega} \preceq \cS $, then  
  eigenvalues of
  ${\cS }^{-1} \cS_{\omega} $  lie  in $[0,1)$.
   Let $\lambda_1> \lambda_2$ be the leading two eigenvalues. 
The convergence rate  of the power iteration  is given  by $\lambda_2/\lambda_1$. 

When both $\lamb_1$ and $\lamb_2$ are close to 1, the power iteration converges slowly. An effective way to speed up the convergence is to adopt the shifted inverse  power iteration related to 
 $I-{\cS }^{-1} \cS_{\omega} $. 
  Let $\mu$ be the dominant eigenvalue given by
\beq\label{eq7}
\mu=\max_{\mbf}\frac{\langle \cS \mbf, (\cS -\cS_{\omega} )^{-1} \cS \mbf
\rangle}{\langle \mbf, \cS  \mbf\rangle }
\eeq
Observe that 
\begin{eqnarray*}
 (I-{\cS }^{-1}\cS_{\omega} )^{-1}&=&(I- (\cR\cR^*)^{-1} \cR^* \cQ^* \diag(\omega) \cQ\cR)^{-1} 
\\
&=& (
(\cR^*\cR)^{-1} \cR^* (I-\cQ^* \diag(\omega) \cQ
) \cR
)^{-1}\\
&=& (\cR^* \cQ^* \diag(1-\omega )\cQ \cR)^{-1} \cR^* \cR.
\end{eqnarray*}
The shifted inverse iteration consists of generating   $\mbf^{(k+1)}$  from solving the equation,
\beq\label{eq8}
\mu^{(k+1)}(\cR^* \cQ^* (1-\diag(\omega) )\cQ \cR) \mbf^{(k+1)}= \cR^* \cR \mbf^{(k)},\; \mu^{(k+1)}\in \IR.
\eeq

\begin{algorithm}
\SetKwFunction{Round}{Round}
\textbf{Input: }  The indicator vector $\omega$ for $I_c$.
\\
\textbf{Random initialization:} $\mbf^{(1)}=\mbf_{\rm rand}$.
\\
\textbf{Loop:}\\
\For{$k=1:k_{\textup{max}}-1$}
{
Apply preconditioned CG to solve $\mbf^{(k+1)}$ from (\ref{eq8}).
}
{\bf Output:} {$ \mbf^{k_{\textup{max}}}$.}
\caption{\textbf{The Inverse Power Method}}
\label{alg:Fourier-null-inv}
\end{algorithm}

Algorithm 2  has several advantages over Algorithm 1.  
First, both algorithms converge slowly  when the spectral gap $|\lambda_1-\lambda_2|$ is close to zero but they do it at different rates. 
Observe that 
  $\lambda$ in (\ref{eq224})   and $\mu$ in \eqref{eq7} are related by 
 $\mu=(1-\lambda)^{-1}$.  Hence 
the convergence  rate of  the shifted inverse  iteration is given by 
 $( 1-\lambda_1)/( 1-\lambda_2)$, which may be significantly less than $\lambda_2/\lambda_1$ for $\lambda_1, \lambda_2$ close to $1$.

Second, when conjugate gradient methods(CG) are used to solve (\ref{eq8'}) or (\ref{eq8}), the condition number is a crucial factor in determining the convergence speed of CG. 
Let  the spectrum of $\cS^{-1} \cS_{\omega}$ lie in some interval $[\lambda_{\min},\lambda_{\max}]\subset [0,1)$. Then  the spectrum of $(I-\cS^{-1} \cS_{\omega})^{-1}$ lies in $[(1-\lambda_{\min})^{-1},(1-\lambda_{\max})^{-1}]$. Empirically  $\lambda_{\max}$ is away from $1$ and $\lambda_{\min}\approx 0$.
Hence the system (\ref{eq8}) has a much  better condition number than the system in (\ref{eq8'}), improving the performance of the  conjugate gradient method.

 In the appendices, we describe an efficient algorithm for computing the pseudo-inverse $\cR^\dagger$ and $\cA^\dagger$ 
as well as the implementation of (\ref{eq8}) 
in the context of diffraction tomography. 
The shifted inverse power itertion  is summarized in  Algorithm~\ref{alg:Fourier-null-inv}.

   \section{Noise-to-signal ratio (NSR)}\label{sec:nsr}
 
Photon noise, also called shot noise,  is due to the statistical nature of photon emission and detection. When light passes through a phase mask and creates a diffraction pattern, the photon noise in that pattern is fundamentally dictated by the number of photons detected at each point in the pattern. 

The noise level is measured by the  noise-to-signal ratio (NSR), the reciprocal of signal-to-noise ratio (SNR), given by
\beq
  \hbox{NSR}=(\hbox{SNR})^{-1}:= {\hbox{\# non-signal photons on average}\over \hbox{ \# signal photons}}. \label{55}
\eeq

Photon noise is commonly described by the Poisson distribution such that the noisy intensity data vector $\widetilde b^2$ has as the noise components  the independent Poisson random variables with the means equal to the noiseless components. To introduce the Poisson noise into
 our mathematical set-up, let $b^2=|\cA f_*|^2$  as before but consider $sb^2$ to be the noiseless intensity data with an adjustable scale factor $s>0$
 representing the strength of illumination.

  Denote the intensity fluctuation   by $z=(z_j)$. The (deterministic) noise photon count is $\sum_j  |z_j|$ 
and the total average noise photon count is given by 
  \beq\label{L1}
\sum_j  \IE |z_j|&\hbox{or more conveniently}& \sum_j \sqrt{\IE|z_j|^2}=
\Big\|\sqrt{\IE|z_j|^2}\Big\|_1,
\eeq
i.e. the root mean square of $z$, 
 where $\|\cdot\|_1$ denotes the L1-norm of the vector. 
With $z=\widetilde b^2-{s} b^2$,  the number noise photons on average is given by $\big\| \sqrt{\IE(\widetilde b^2-{s} b^2)^2}\big\|_1$.  In other words, the NSR \eqref{55} can be conveniently  written  as 
        \beq\label{nsr}
  \hbox{NSR}=(\hbox{SNR})^{-1}:= {\Big\| \sqrt{\IE(\widetilde b^2-{s} b^2)^2}\Big\|_1\over s \|b^2\|_1}= {\Big\| \sqrt{\hbox{var}(\widetilde b^2)}\Big\|_1\over  \|\IE(\widetilde b^2)\|_1}.  
   \eeq
By  a straightforward calculation with the Poisson distribution, we have
\beq
   \hbox{NSR}=(\hbox{SNR})^{-1}= {\|{b}\|_1\over \sqrt{s}\|b^2\|_1}.\label{56}
   \eeq

\section{Testing the algorithms}~\label{sec:num1}

    \begin{figure}
  \centering
\subfigure[2D image]{\includegraphics[width=4.5cm]{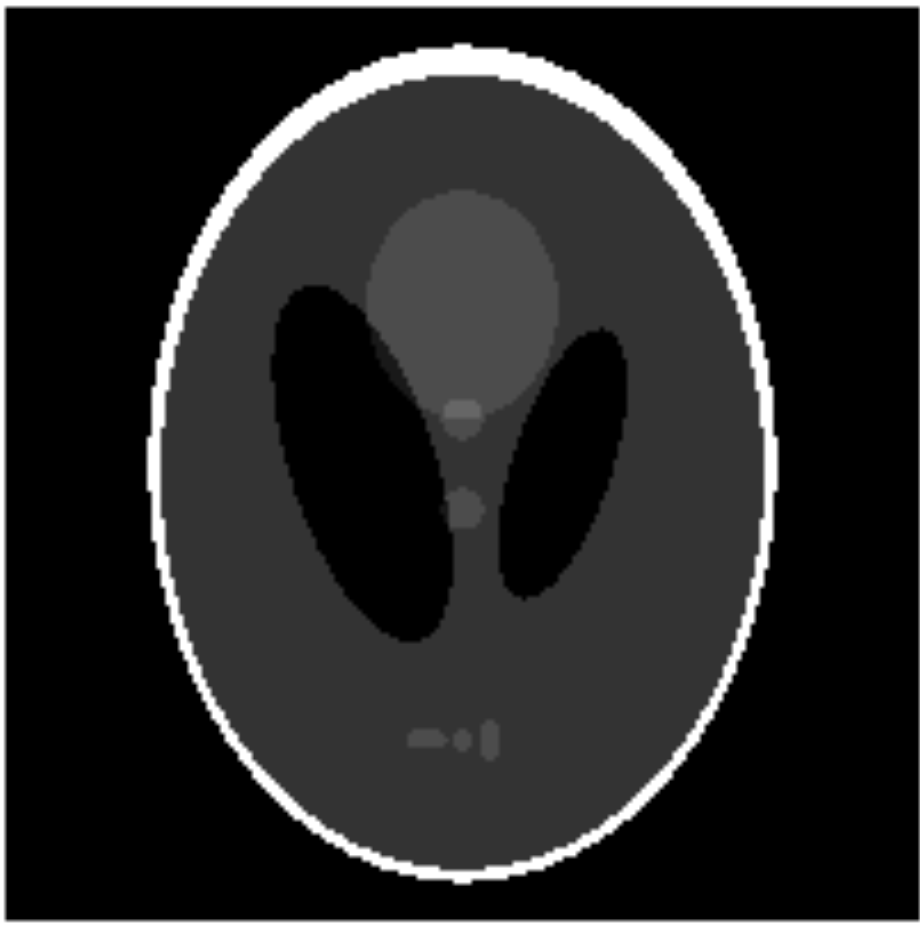}}\hspace{1cm}\hspace{1cm}
\subfigure[3D representation]{\includegraphics[width=3.5cm]{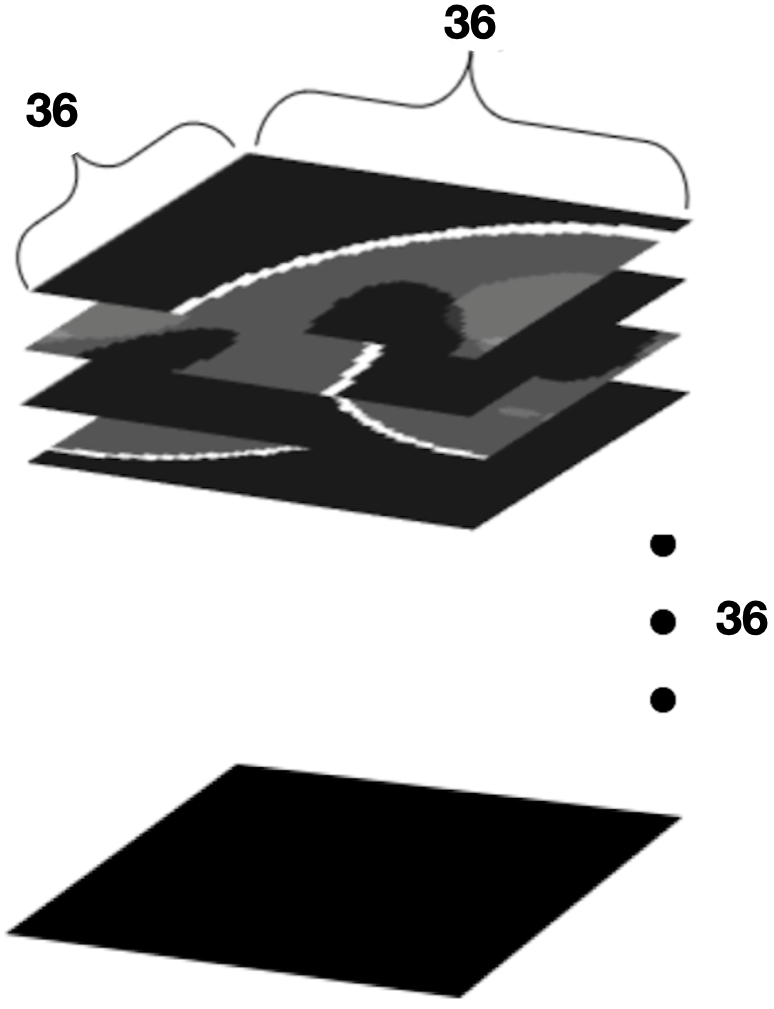}}
\caption{$216\times 216$ image  $\Longrightarrow$ $36\times 36\times 36$ object. }
\label{fig:3D}
\end{figure}

In our simulations the mask phases $ \phi$ are  independent  uniform random variables over $[0,2\pi)$. We will also test 2-phase and 4-phase random masks,  respectively exemplified by
the half-wave and quarter-wave plates for X-rays and $\pi$-shift and $\pi/2$-shift phase plates in electron microscopy.  The phases of the 4-phase mask phases are independent uniform variables  in
 $\{0, \pm \pi/2, \pi\}$ while  the 2-phase  mask phase are the Bernoulli random variables from $\{0, \pi\}$. 
Like the continuous phase random mask, the 2-phase and 4-phase mask pixels have zero mean and unit variance.

To aid visualization, we construct the {\em complex-valued} 3D object from the $216\times 216$ phantom (Fig. \ref{fig:3D} (a)) by partitioning the real-valued phantom image into 36 pieces, each of which is $36\times 36$ and stacking them into a $36\times 36\times 36$ cube (Fig. \ref{fig:3D}(b)). We then randomly modulate the phase of each voxel. The resulting 3D object is called 3D randomly phased phantom (RPP).
 We shall refer to the corresponding 2D randomly phased phantom as the {\em flattened version} of the 3D object.

For  show the results for the  absolute correlation between $f_*$ and reconstruction $f$ given by 
\[
R(f,f_*):={|f^* f_*|\over \|f\| \|f_*\|}
\]
for a pixel-wise, zero-mean object such as RPP. 
$R$ is closely related to  the structure component of  the structural similarity index measure (Appendix \ref{sec:ssim}), especially when the pixel averages are nearly zero. 

When $f$ is the leading eigenvector corresponding to $\lamb_1$, we will denote the correlation as $R_1$;
when $f$ is the second leading eigenvector corresponding to $\lamb_1$, we denote the correlation as $R_2$.

We use the reconstruction scheme \eqref{power} with the threshold selected separately
for each coded diffraction pattern $j=1,\cdots, m$ according to the guideline:
\begin{itemize}
\item For small NSR, Theorem \ref{thm0} suggests a small value of $|I_j|/p^2<\half$
\item For large NSR, we adopt the median rule,
i.e. $|I_j|/p^2=\half$
\end{itemize}
where $I_j$ is the index set of weaker signals in the $j$-th coded diffraction pattern. 
Specifically, we adopt the rule 
\beq
\label{threshold}
|I_j|/ p^2= \min(1/4+\hbox{NSR}/4,1/2)
\eeq
which  is consistent with the condition \eqref{scaling}. 

To avoid  the missing cone problem in tomography, we consider
$m=3\rho n$  more or less evenly distributed  random directions with the adjustable  parameter $\rho>0$ (see \eqref{tset} in Appendix A). 
According to \cite{Born-tomo,3D-phasing}, a non-degenerate set of $n+1$ directions is the minimum requirement for   discrete tomography with noiseless data. For 1-bit diffraction tomography with highly noisy data, however, one should deploy a larger set of directions for  any reasonable reconstruction.  

\begin{figure}
\centering
\subfigure[$\rho=1$, NSR=0.0]{  \includegraphics[width=0.3\textwidth]{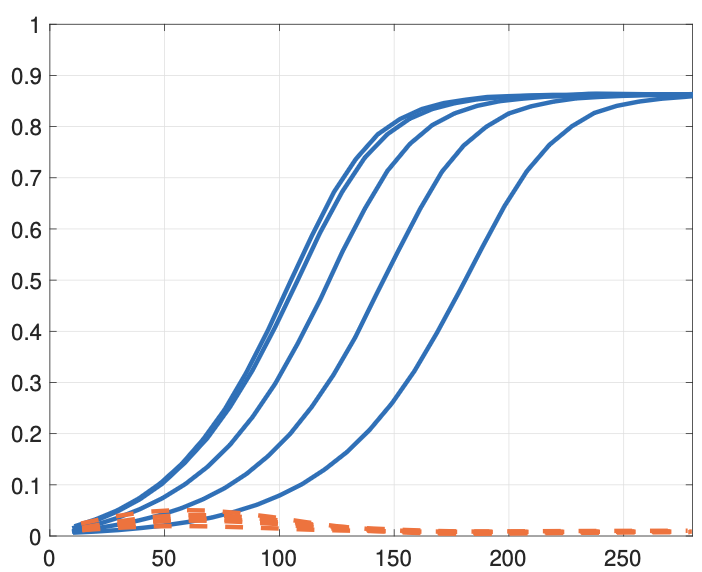}}\quad
\subfigure[$\rho=2$, NSR=0.0]{  \includegraphics[width=0.3\textwidth]{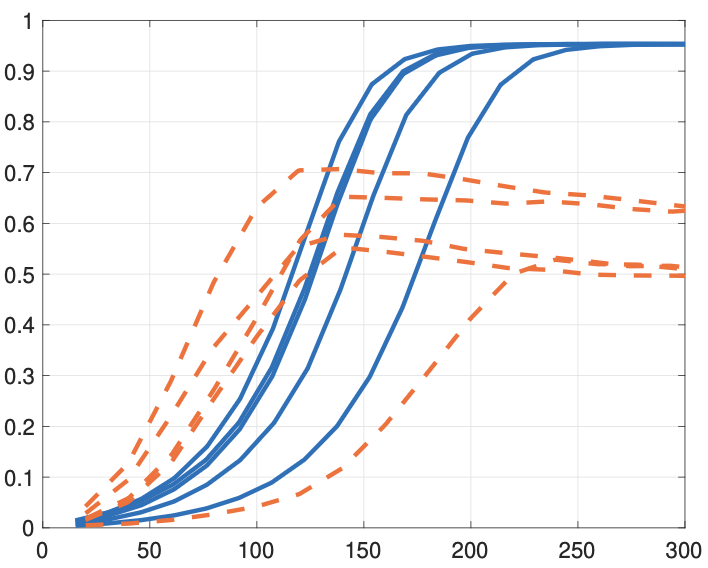}}\quad
\subfigure[$\rho=4$, NSR=0.0]{  \includegraphics[width=0.3\textwidth]{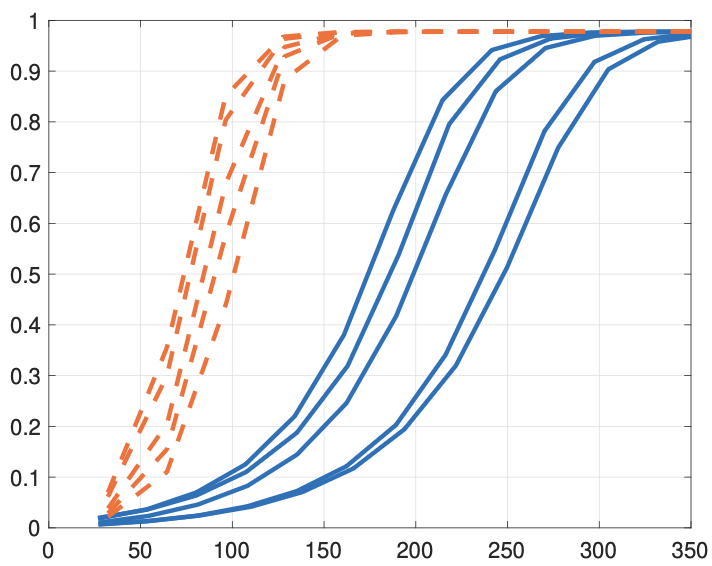}}\\
\subfigure[$\rho=1$, NSR=0.5]{  \includegraphics[width=0.3\textwidth]{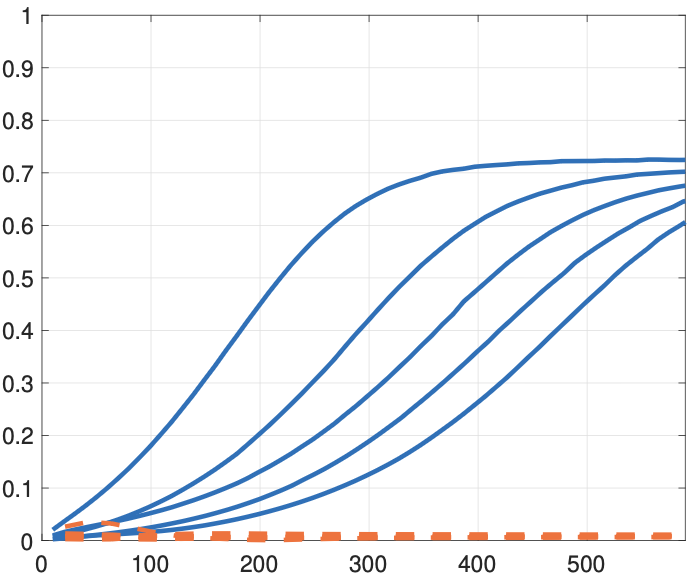}}\quad
\subfigure[$\rho=2$, NSR=0.5]{  \includegraphics[width=0.3\textwidth]{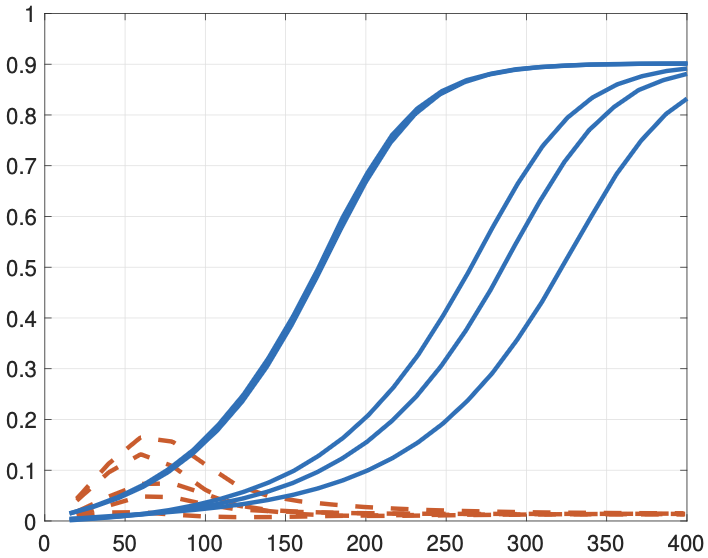}}\quad
\subfigure[$\rho=4$, NSR=0.5]{  \includegraphics[width=0.3\textwidth]{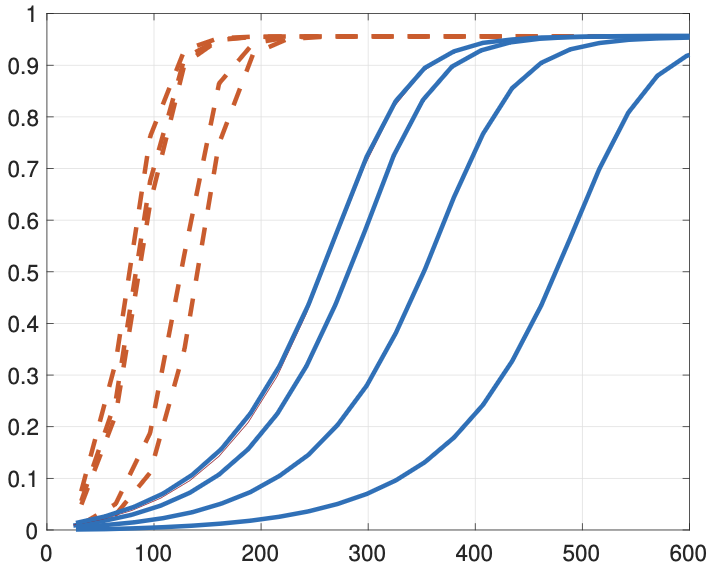}}\\
\subfigure[NSR=0.0]{  \includegraphics[width=0.45\textwidth]{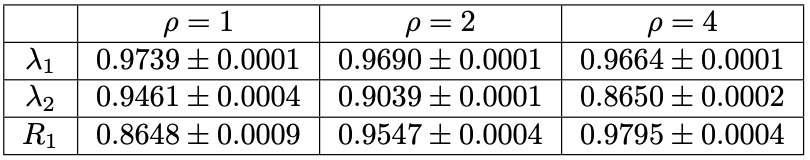}}
\subfigure[NSR=0.5]{  \includegraphics[width=0.45\textwidth]{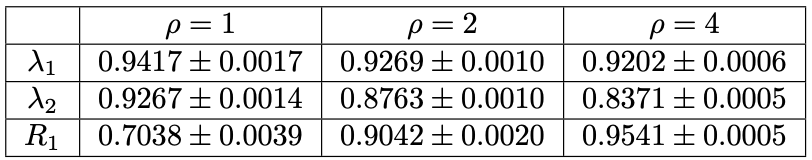}}
\caption{Correlation versus computation time (in second) with 5 independent trials with the power method (dashed line) and the inverse power method (solid line). The number of diffraction patterns is $m=3\rho n$. }
\label{fig:time_R}
\end{figure}

First let us compare the convergence rates of Algorithm 1 and Algorithm 2.  

Here and below, 5 independent trials (with independent Poisson random variables and initial points) are
conducted  to compute the mean and standard deviation of $R_1, R_2, \lamb_1,\lamb_2$ as indicated  by the error bars. In each trial, the random mask and the set of projections are kept unchanged.

Figure~\ref{fig:time_R} shows the
convergence comparison for various NSRs and $\rho$'s. We make the following observations:

\begin{itemize}
\item With a sufficiently large number of diffraction patterns ($\rho\ge 4$), the power iteration converges with faster and  more stable rate than the inverse power method. 

On the other hand, when $\rho$ decreases the inverse power method degrades in a more graceful manner than the power method. 
\item There is significant denoising effect in both methods (as NSR increases from 0 to 0.5) with a sufficient number of diffraction patterns. This is consistent with the denoising mechanism pointed out in Section \ref{sec:robust}. 

\item With a sufficient number of diffraction patterns ($\rho=4$) the convergence behavior of the power method appears less sensitive to the initial point than the inverse power method. However, the eigenvalues regulating the eventual convergence and the correlation of the reconstruction with the inverse power method are robust to the Poisson noise and the initial points, c.f. Figure \ref{fig:time_R} (g)(h).
\end{itemize}

\begin{figure}
 \centering   
\subfigure[Eigenvalues vs. $\rho$]{  \includegraphics[width=6cm]{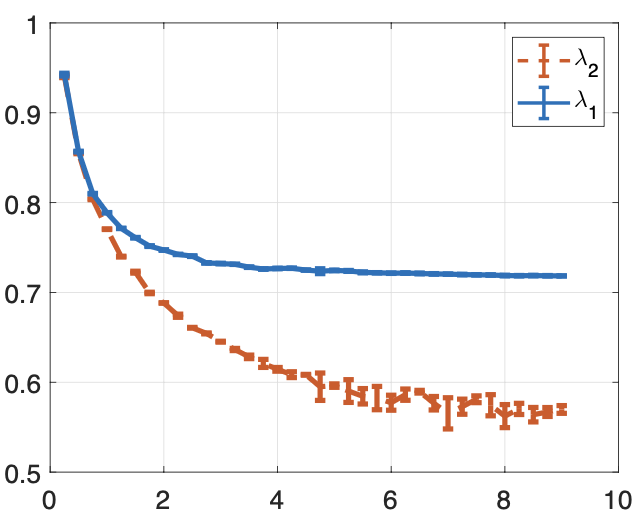}}\quad\quad
\subfigure[Correlation vs $\rho$]{  \includegraphics[width=6cm]{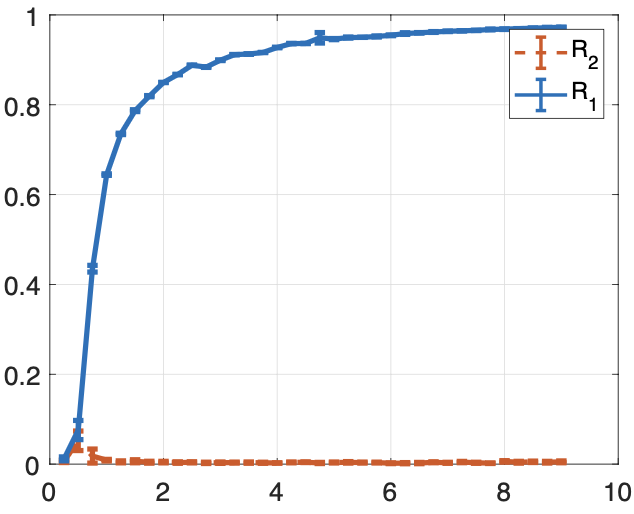}}
\caption{(a) The two leading eigenvalues and (b) the correlations as function of $\rho$ at NSR =1 where
$R_1$ and $R_2$ are respective correlations of the two leading eigenvectors with the original object.  The number of diffraction patterns is $m=3\rho n$. }
\label{fig6}
\end{figure}

Figure \ref{fig6} shows how the eigenvalues and the correlations change with the number of diffraction patterns at a larger noise level NSR = 1. We also compute  the second leading eigenvector of $\cS^{-1}\cS_\omega$ to  shed a light on the relation between  the convergence behavior and  the  spectral gap. To this end,
we employ the Krylov subspace methods, described in Appendix~\ref{sec:1.3}, to extract the second leading eigenvector. 
We observe that
\begin{itemize}
\item The spectral gap for the inverse power method initially widens with $\rho$ and then gradually saturates (Figure \ref{fig6}(a));
\item  The reconstruction correlation sharply rises  and then plateaus (and Figure \ref{fig6}(b)). 
\end{itemize}
By and large, there is little variation across independent trials, except for the second eigenvalue with large $\rho$ (Figure \ref{fig6}(a)). 

\begin{figure}
 \centering   
\subfigure[R vs NSR $\in (0,1)$]{  \includegraphics[width=0.3\textwidth]{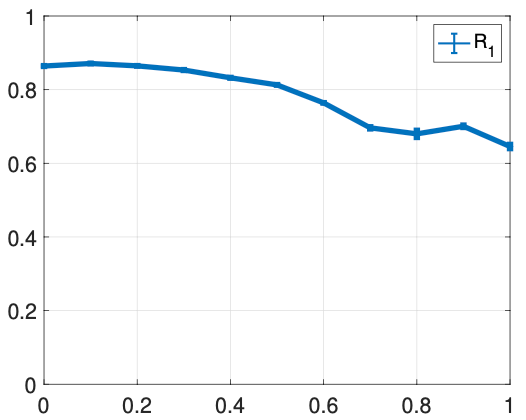}}\quad
\subfigure[NSR=0.5,$R_1$=0.8344]{  \includegraphics[width=3.7cm]{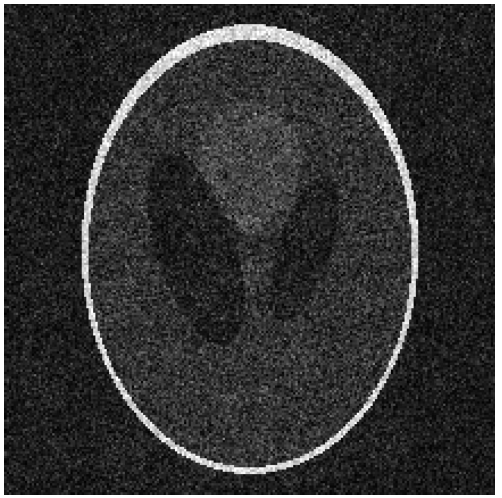}}\quad
\subfigure[NSR=1.0,$R_1$=0.6574]{  \includegraphics[width=3.7cm]{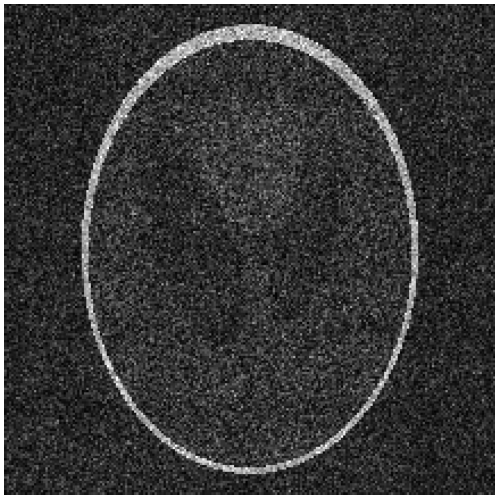}}\\
\subfigure[R vs NSR $\in (0,1)$]{  \includegraphics[width=0.3\textwidth]{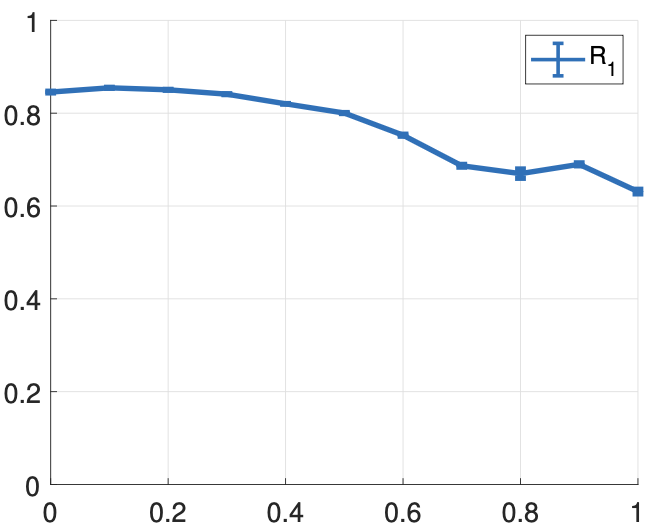}}\quad
\subfigure[NSR=0.5,$R_1$=0.8002]{  \includegraphics[width=3.7cm]{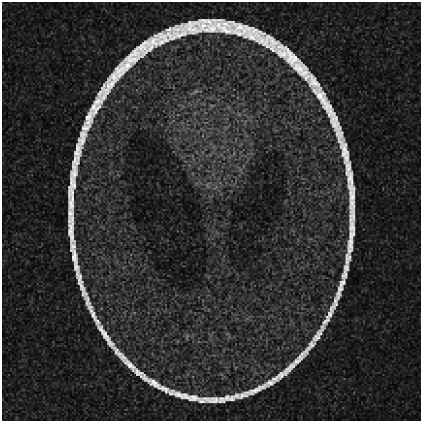}}\quad
\subfigure[NSR=1.0,$R_1$=0.6368]{  \includegraphics[width=3.7cm]{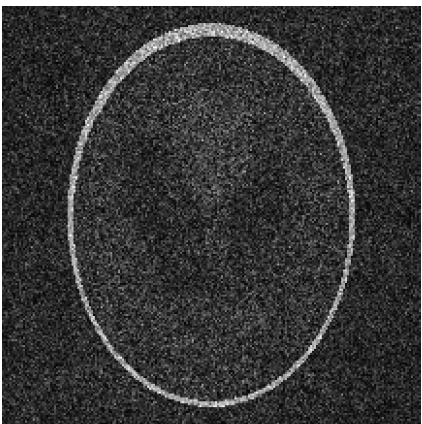}}
\caption{The correlation of the RPP reconstruction with $3n$ diffraction patterns vs NSR $\in (0,1)$ and the flattened  magnitude $|f|$ with (bottom) and without (top) a beam stop.  The beam stop covers the $1\%$ center area of each diffraction pattern.  Different direction sets $\cT$ are independently selected for different NSRs.}
\label{fig:rho1and2}
\end{figure}

\begin{figure}
\centering
\subfigure[R vs NSR $\in (0,1)$]{  \includegraphics[width=5cm]{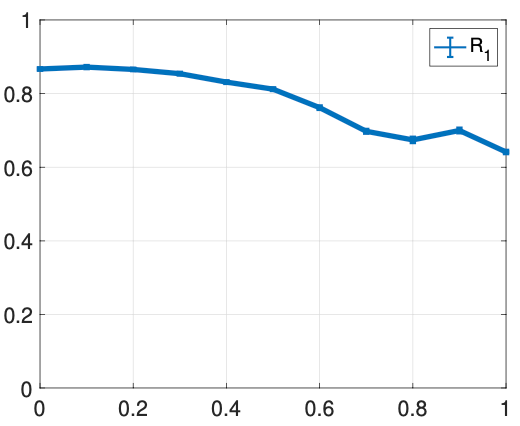}}\quad
\subfigure[NSR=0.5, R=0.8380]{  \includegraphics[width=4cm]{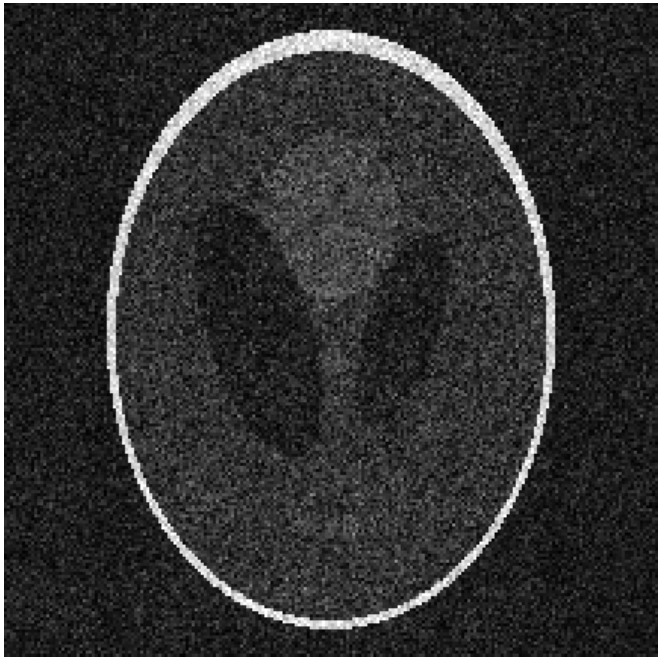}}\quad
\subfigure[NSR=1.0, R=0.6639]{  \includegraphics[width=4cm]{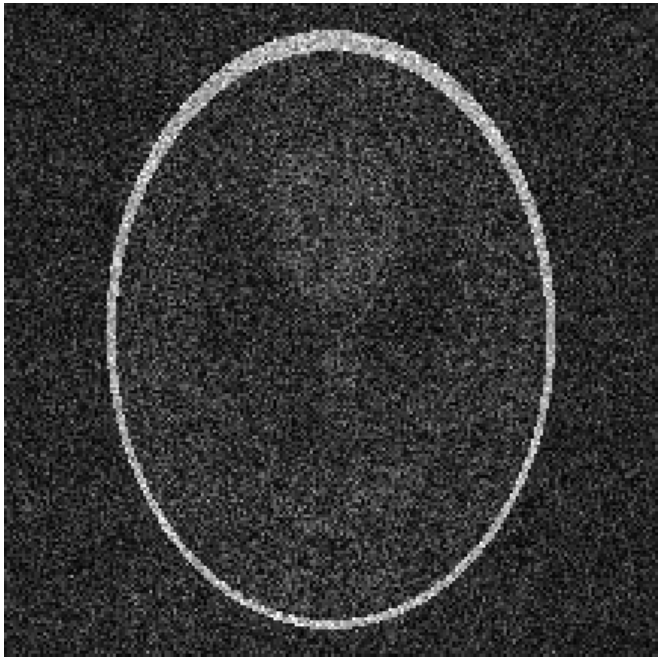}}\\
\subfigure[R vs NSR $\in (0,1)$]{  \includegraphics[width=5cm]{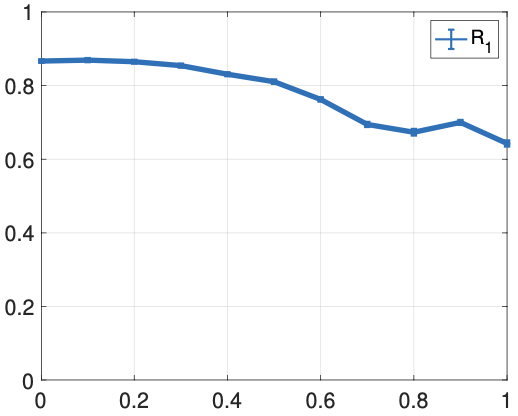}}\quad
\subfigure[NSR=0.5, R=0.8366]{  \includegraphics[width=4cm]{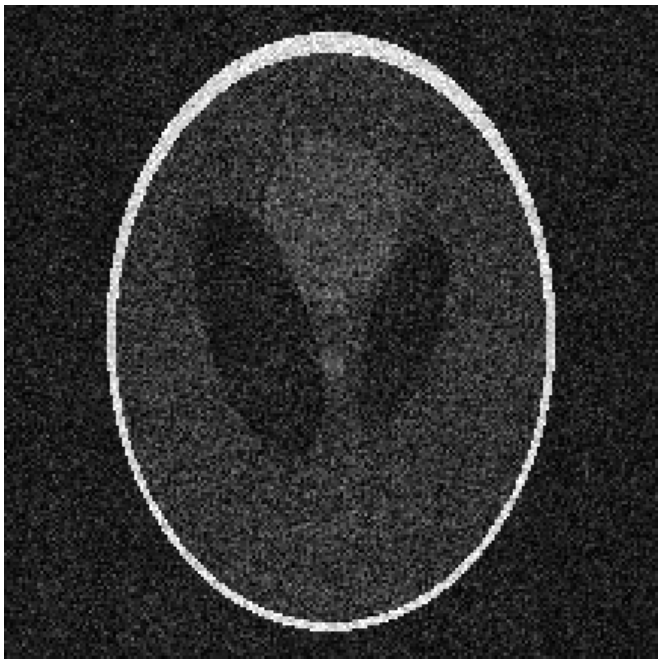}}\quad
\subfigure[NSR=1.0, R=0.6684]{  \includegraphics[width=4cm]{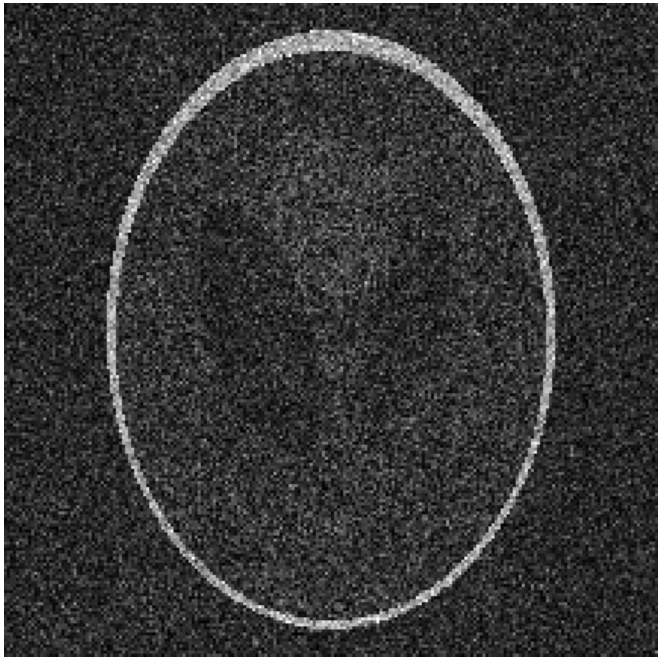}}\\
\caption{The correlation of the RPP reconstruction with $3n$ diffraction patterns vs NSR $\in (0,1)$ and the flattened  magnitude $|f|$ with a 4-phase (top) and 2-phase  (bottom) random mask.}
\label{fig:2-4phase}
\end{figure}

 In the transmission geometry of X-ray diffractive imaging,  a beam stop is usually added to prevent the high-intensity, direct (unscattered) beam from damaging the pixel sensor. 
  Figure~\ref{fig:rho1and2} compares  the reconstruction with and without a beam stop
  using $3n$ diffraction patterns ($\rho=1$ in \eqref{tset}). In modern synchrotron and XFEL beamlines, the percentage of the detector area covered by a beam stop is usually much less than $1\%$ \cite{stop1, stop2}, we simulate the effect of a beam stop by setting the central $1\%$ pixels of each diffraction pattern to zero. Figure \ref{fig:rho1and2} (b) shows moderate degradation in the quality of reconstruction ($\sim 4\%$ reduction in $R$) due to the presence of a beam stop. 
Comparing   Figure~\ref{fig:rho1and2} with Figure 8 of \cite{3D-phasing}  we find unsurprisingly  that  reconstruction with 1-bit data has a poorer quality than that with full intensity data.

Next we test the performance with a 2-phase and 4-phase random masks.
Figure \ref{fig:2-4phase}  shows that the quality of reconstruction of the 2-phase and  4-phase masks is nearly identical to that of a continuous phase mask, indicating that speckle richness comes from randomness, not fine phase quantization.
This opens the door to low-cost, high-efficiency binary (or 4-level)
phase plates and modulators without compromising reconstruction quality (see Section \ref{sec:final}).

\section{Optimal fractionation of radiation dose}\label{sec:num2}

The idea of dose fractionation emerged primarily in the context of electron microscopy, where highly sensitive samples are prone to radiation damage  \cite{Hegerl,Hoppe,dose-num95}. The core principle of dose fractionation is to split the total electron dose into multiple, lower-intensity exposures rather than applying the entire dose in a single or few exposures. This technique balances the need for sufficient signal to image fine details with the need to limit radiation damage that can degrade or destroy the sample. Dose fractionation is an operation principle in the successful method of single-particle
imaging in electron microscopy \cite{Frank06}. 

Likewise, by using a properly fractionated dose per diffraction pattern, tomographic X-ray diffractive imaging can optimize  3D structure determination  with a fixed level of risk of damaging the sample \cite{Jacobsen}.

As proved in  \cite{3D-phasing},  unique $n^3$-structure determination requires at least $n+1$ {\em noiseless} diffraction patterns. With the Poisson noise model, noiseless diffraction patterns contain an infinitely large dose. 
A finite level of total dose necessarily  results in noisy diffraction patterns whose NSR increases with  the number of diffraction patterns $m$.

As the degree of dose fractionation  is proportional to the number of diffraction patterns $m$,  it is natural to ask how  $m$ scales with NSR for a fixed $D$.  As $m$ is proportional to $D$, NSR is a key parameter for characterizing an optimal fractionation.

\subsection{Dose as the mean photon number}

\begin{figure}[t]
 \centering   
  \subfigure[$D=1.2\times 10^6$]{  \includegraphics[width=5cm]{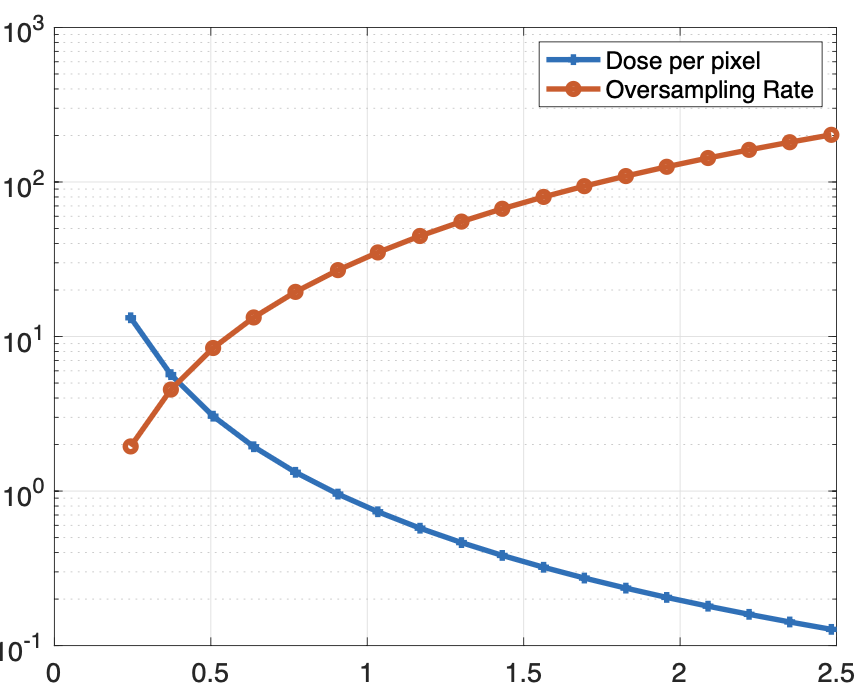}}\quad
   \subfigure[$D=2.5\times 10^6$]{  \includegraphics[width=5cm]{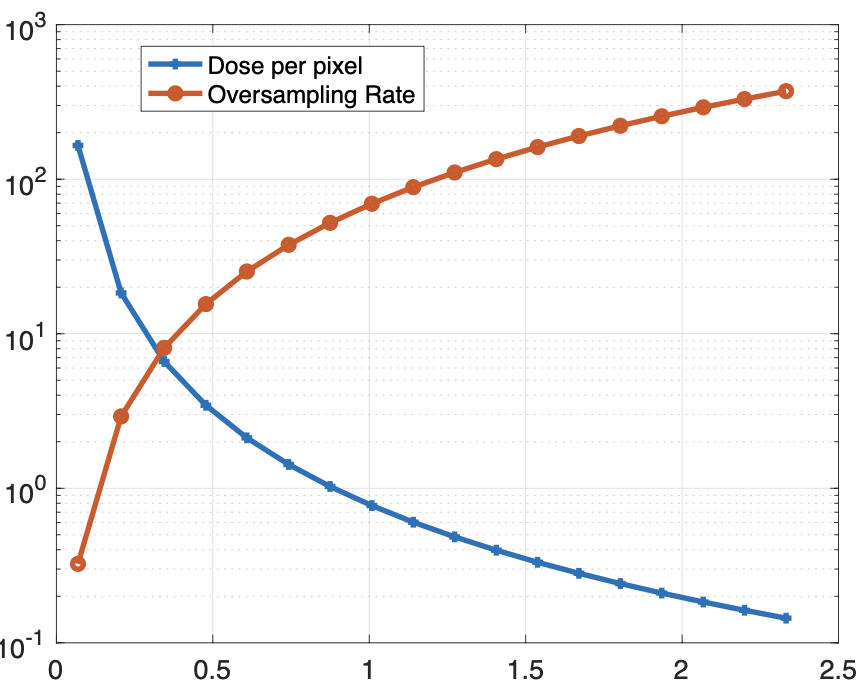}}\quad
 \subfigure[$D=5\times 10^6$]{  \includegraphics[width=5cm]{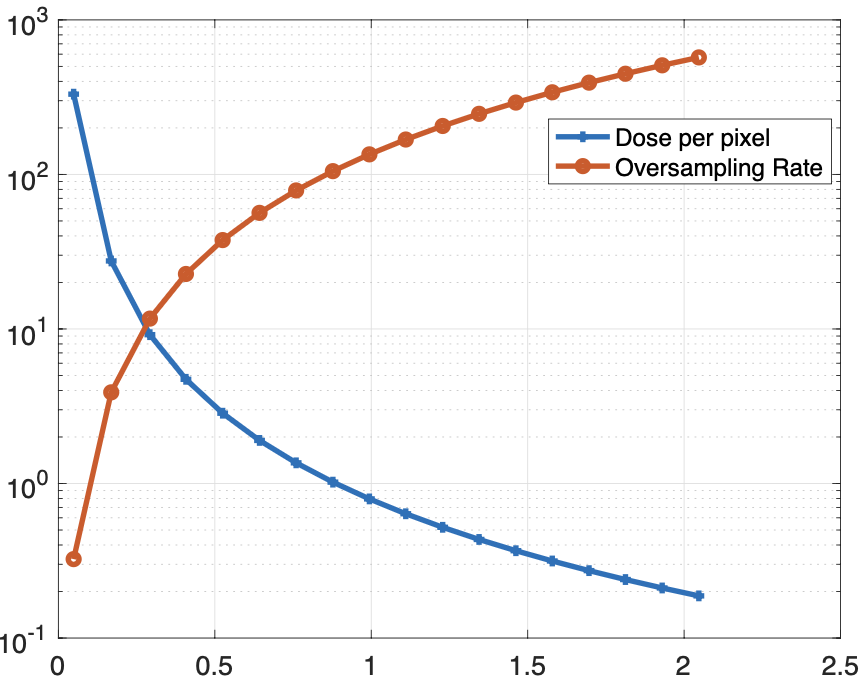}}
\caption{Dose per pixel and oversampling ratio versus NSR with various dose level as defined in \eqref{dose2}. }
\label{fig:rho-new}
\end{figure}

Let us consider the mean photon number
\beq
D&=&s\|b^2\|_1\label{dose2}
\eeq
 as the definition of  dose.

To find the dependence of  $m$  on  NSR
   for a fixed dose, rewrite \eqref{56}
 as
   \[
   \hbox{NSR}= {\sqrt{s}\|{b}\|_1/D} =  {\|{b}\|^2_1\over  \hbox{NSR}\|b^2\|_1D}. 
   \]
 Hence, 
    \beq\label{2nd'}
m\sim  \|b\|_1=(\hbox{NSR})^2\cdot D {\|b^2\|_1\over \|b\|_1}
    \eeq
whose right hand side is roughly independent of $m$ for a fixed $D$ (as $\|\cdot\|_1$ appears in both the numerator and denominator).  
In other words, the number of diffraction patterns scales quadratically with NSR for a fixed $D$. 

Figure \ref{fig:rho-new} shows the oversampling ratio $L$, defined as the ratio of the number  of measurement data $m (2n-1)^2$ to the object dimension $n^3$ ($\approx 12\rho$), and the dose per pixel as  NSR varies for three different levels of dose. 

\subsection{Alternative definition of dose}

While the mean photon number provides a straightforward measure of the energy distribution, let us consider an alternative definition of dose as  the root mean square, taking into account both its amplitude and variability:
\beq
D&:=&\Big\| \sqrt{\IE(\widetilde b^4)}\Big\|_1=\|\sqrt{s^2b^4+\IE(z^2)}\|_1
\eeq
as the dose metric. This definition  
puts greater weight on high energy levels than \eqref{dose2} does.
For the Poisson statistics,  $\IE(z^2)=sb^2$ and hence
\beq
D&=&\|\sqrt{s^2b^4+sb^2}\|_1=\sqrt{s}\|b\sqrt{sb^2+1}\|_1.\label{dose}
\eeq
To find the dependence of $m=3\rho n$  on  NSR
   for a fixed dose, we need to solve the two nonlinear equations, 
    \eqref{dose} and \eqref{56}.

   For $s$ large (small NSR), $\sqrt{s^2b^4+sb^2}\sim sb^2$ and $D\sim s\|b^2\|_1$. Combining with \eqref{56}, we have
    \beq\label{2nd}
  m\sim  \|b\|_1\sim (\hbox{NSR})^2\cdot D {\|b^2\|_1\over \|b\|_1}
    \eeq
    with the right hand side roughly independent of $m$. 
    
    For $s$ small (large NSR), $\sqrt{s^2b^4+sb^2}\sim \sqrt{s} b$ and $D\sim \sqrt{s}\|b\|_1$. 
    Combining with \eqref{56}, we have
    \beq\label{1st}
m\sim    \|b\|_1\sim  \hbox{NSR} \cdot D {\|b^2\|_1\over \|b\|_1}
    \eeq
 with the right hand side roughly independent of $m$. 
 
 Hence, instead of a single quadratic scaling law \eqref{2nd'} for the definition \eqref{dose2}, there are two different scaling behaviors (quadratic for small NSR  or linear for large NSR) in  the number of diffraction patterns as NSR varies.

\begin{figure}[t!]
 \subfigure[$R_1,R_2$ vs NSR]{  \includegraphics[width=5cm]{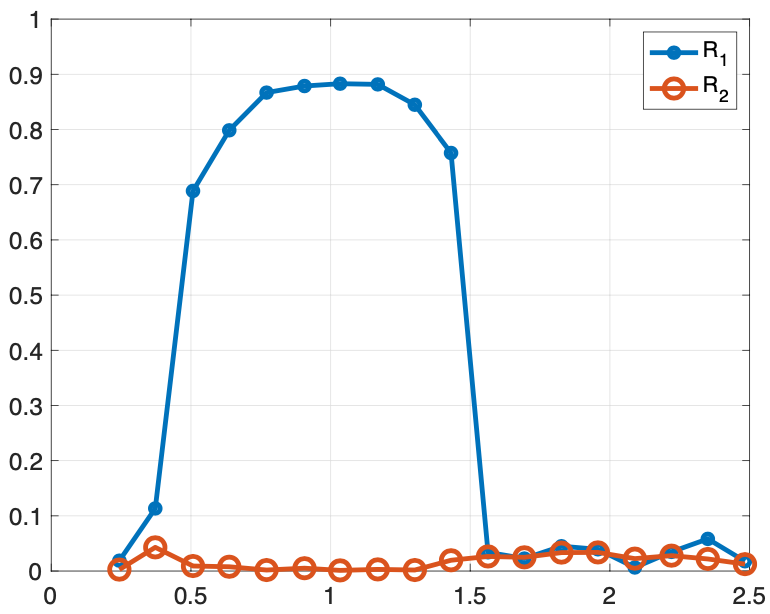}}\quad
 \subfigure[$R_1,R_2$ vs NSR]{  \includegraphics[width=5cm]{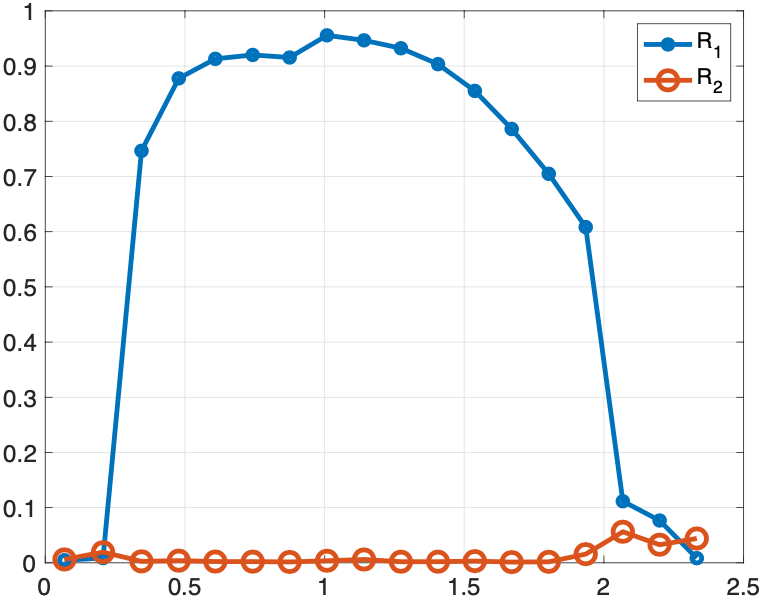}}\quad
 \subfigure[$R_1,R_2$ vs NSR]{  \includegraphics[width=5cm]{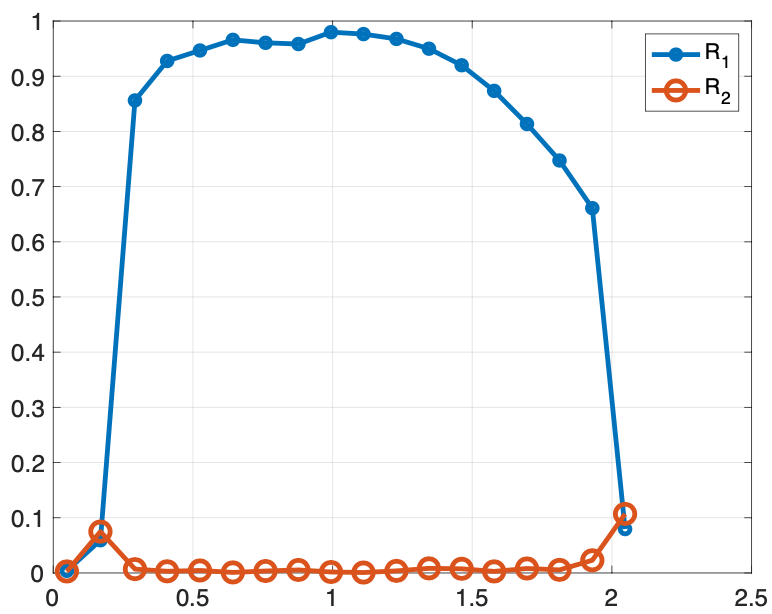}}\\
\subfigure[$\lamb_1, \lamb_2$ vs NSR]{  \includegraphics[width=5cm]{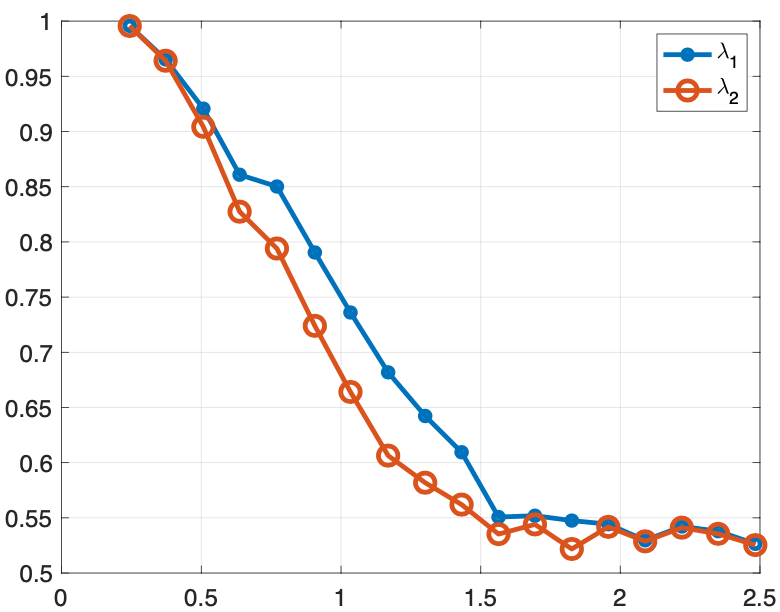}}\quad
\subfigure[$\lamb_1, \lamb_2$ vs NSR]{  \includegraphics[width=5cm]{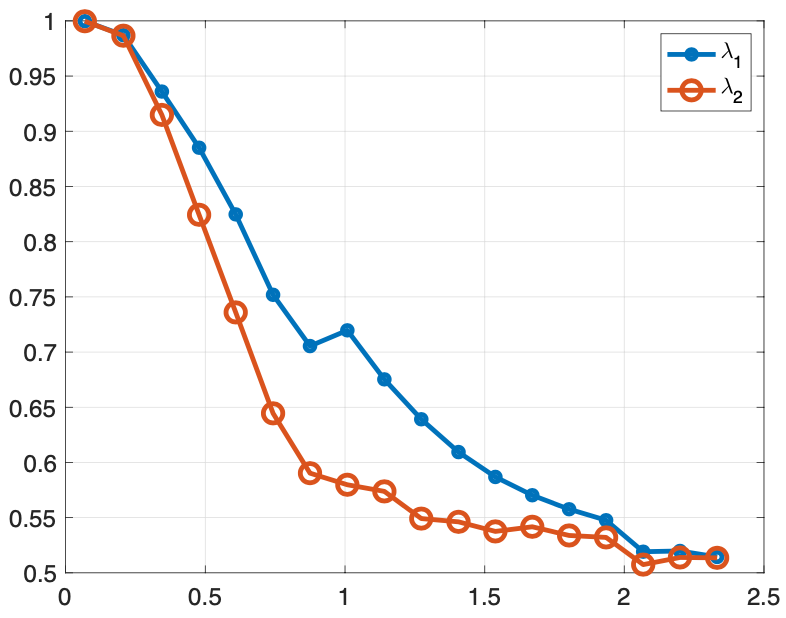}}\quad
\subfigure[$\lamb_1, \lamb_2$ vs NSR]{  \includegraphics[width=5cm]{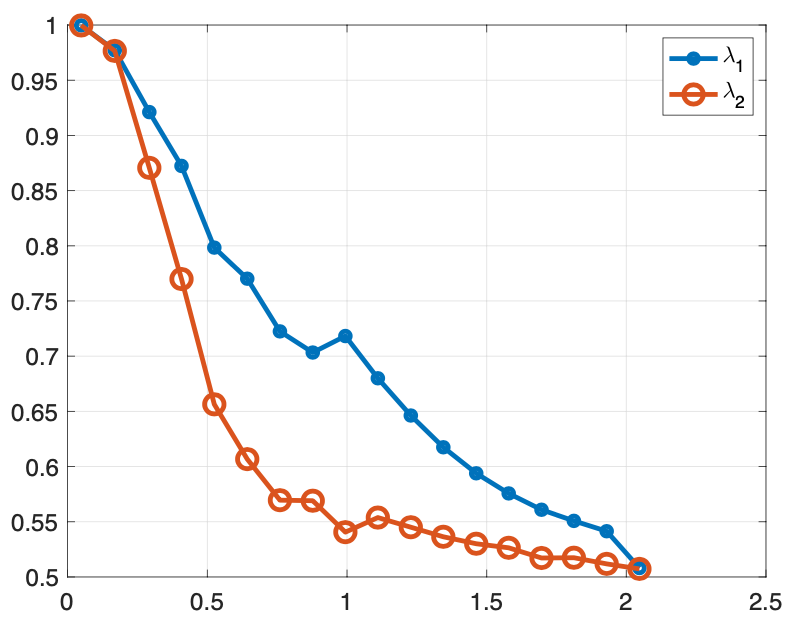}}\\
 \subfigure[SSIM vs NSR]{  \includegraphics[width=5cm]{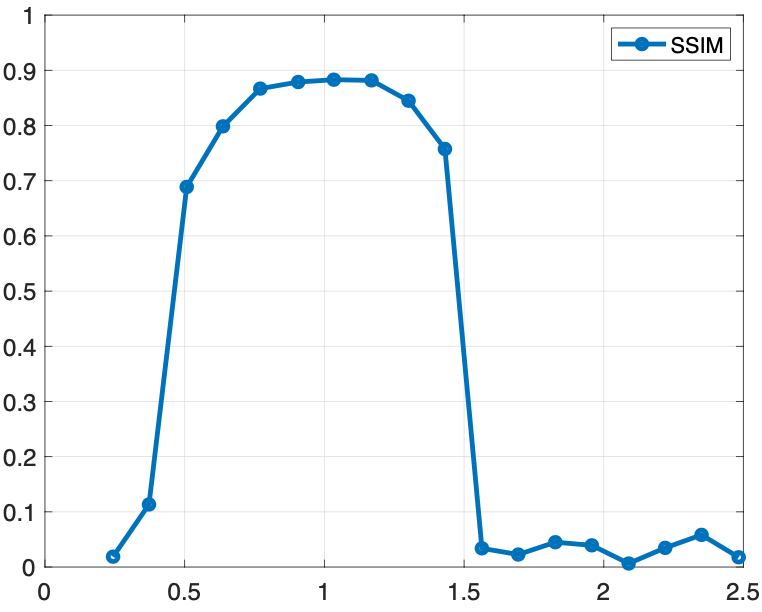}}\quad
\subfigure[SSIM vs NSR]{  \includegraphics[width=5cm]{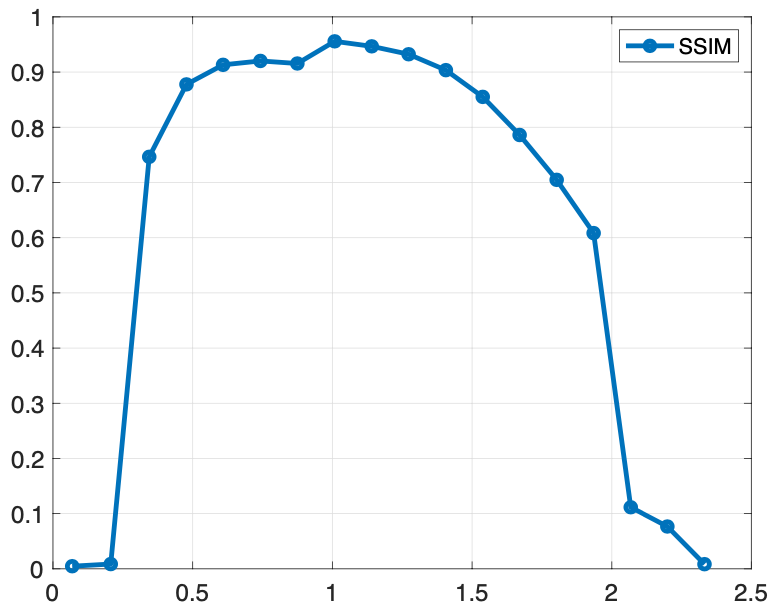}}\quad
 \subfigure[SSIM vs NSR]{  \includegraphics[width=5cm]{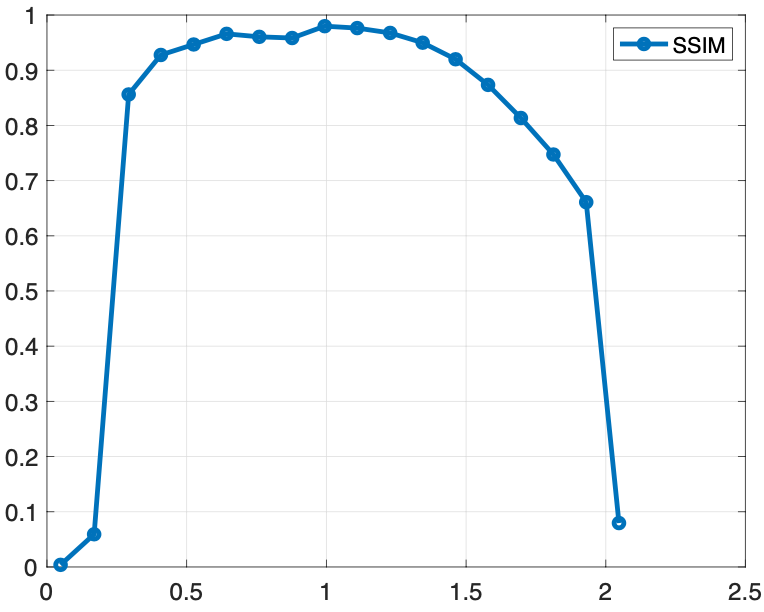}}
\caption{Correlations, eigenvalues and SSIM \eqref{ssim2} vs NSR with a fixed $(a)(d)(g) D=1.2 \times 10^6$; (b)(e)(h) $D=2.5\times 10^6$;  (c)(f)(i) $D=5\times 10^6$ for $D$ defined in \eqref{dose2}. }
\label{fig:opt-new}
\end{figure}

\subsection{Dose fractionation simulation}

For reconstruction, we test the shifted inverse power method at three levels of dose: $D=1.2\times 10^6, 2.5 \times 10^6, 5  \times 10^6$, according both definitions of $D$ discussed above.

Let $\lambda_{1}\!>\!\lambda_{2}$ be the two leading eigenvalues of $S^{-1}S_{\omega}$  (Sec. \ref{sec:1.1}). 
Figure \ref{fig:opt-new} shows the results with the definition \eqref{dose2} and Figure \ref{fig:opt} with the definition \eqref{dose}.
Both figures show  $R_1$ (resp. $R_2$),  the correlation between the true object and the leading (resp. the second leading) eigenvector, as NSR varies for a fixed  $D$. In both cases, the optimal dose fraction for all three levels of $D$ happens around NSR =1 where the spectral gap is also the largest.

Figure \ref{fig:opt-new} (g)(h)(i) also show the performance in terms of the structural component of SSIM which is very close to the quality metric $R_1$ because the pixel average is  close to zero for the object RPP and its reconstruction.

The main difference between Figure \ref{fig:opt} and Figure \ref{fig:opt-new} is a narrower $R_1$-curve and smaller spectral gap with the definition \eqref{dose} of $D$, due to the linear scaling behavior  \eqref{1st} for NSR greater than 1.

Figure \ref{fig:opt-new} and \ref{fig:opt}  clearly show  that the quality of reconstruction  
and the spectral gap are closely related to each other, both peaking at
$\mathrm{NSR}\simeq1$.  And, for both definitions,  \eqref{dose2} and \eqref{dose}, a lower level of $D$ gives rise to a narrower performance curve (in $R_1$, SSIM and spectral gap), making the optimal dose effect more pronounced.

\begin{figure}
 \centering   
 \subfigure[$R_1,R_2$ vs NSR]{  \includegraphics[width=5cm]{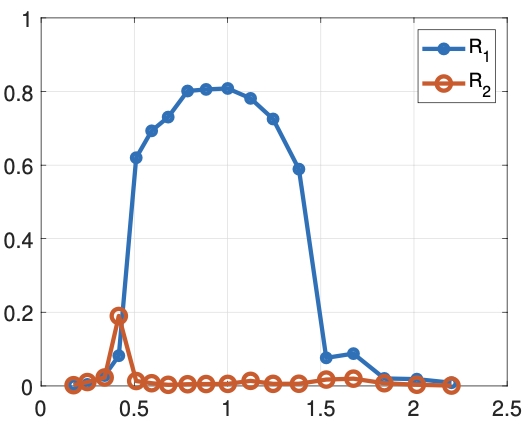}}\quad
 \subfigure[$R_1,R_2$ vs NSR]{  \includegraphics[width=5cm]{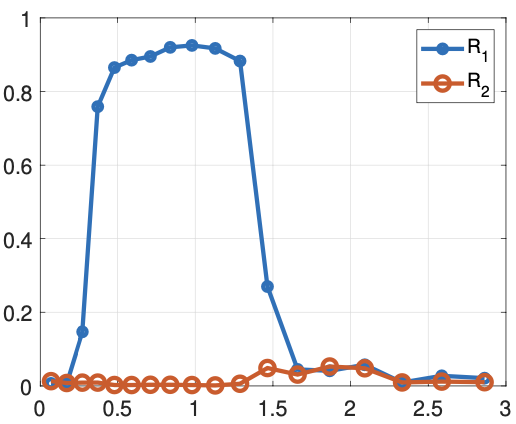}}\quad
 \subfigure[$R_1,R_2$ vs NSR]{  \includegraphics[width=5cm]{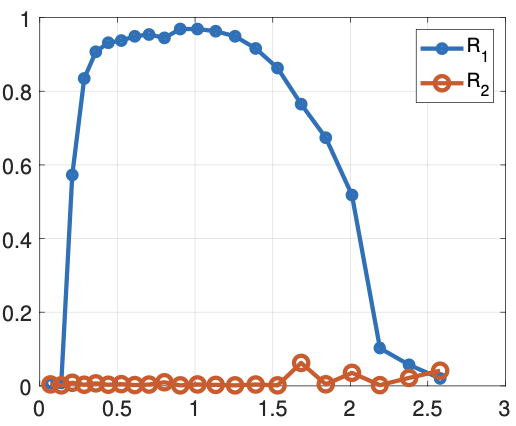}}\\
\subfigure[$\lamb_1, \lamb_2$ vs NSR]{  \includegraphics[width=5cm]{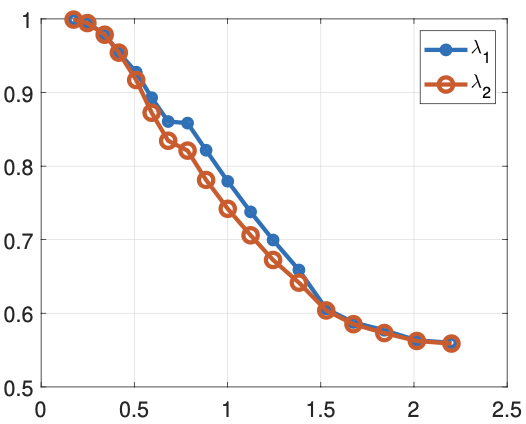}}\quad
\subfigure[$\lamb_1, \lamb_2$ vs NSR]{  \includegraphics[width=5cm]{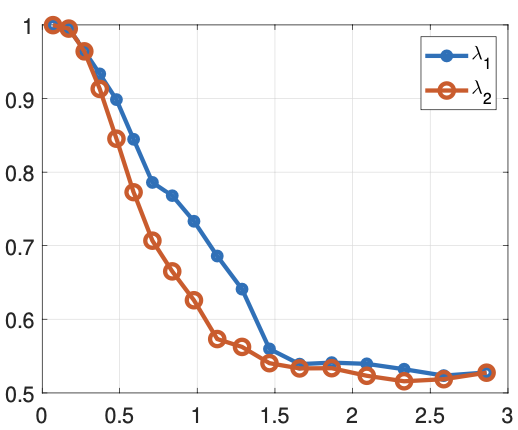}}\quad
\subfigure[$\lamb_1, \lamb_2$ vs NSR]{  \includegraphics[width=5cm]{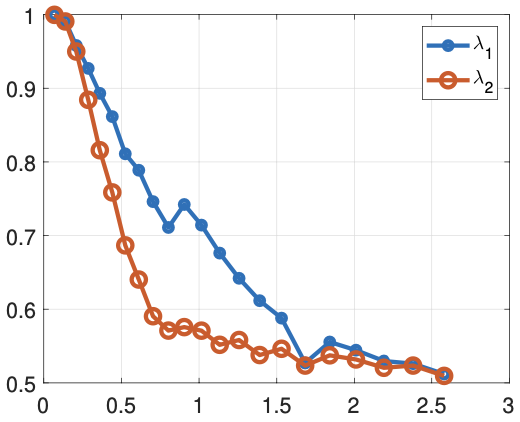}}
\caption{Correlations and eigenvalues vs NSR with a fixed $(a)(d) D=1.2 \times 10^6$; (b)(e) $D=2.5\times 10^6$;  (c)(f) $D=5\times 10^6$  for $D$ defined in \eqref{dose}.}
\label{fig:opt}
\end{figure}

\section{Conclusion and discussion}\label{sec:final}
We have  demonstrated the feasibility and robustness of 1-bit diffraction tomography in handling high-noise environments by using a random phase  mask.  

The optimal fractionation characterized by SNR = 1 means one photon per pixel on average is the informationally more efficient measurement. In view of the dose-resolution scaling
laws  $D\!\propto\!d^{-4}\sim n^4$ for coherent imaging \cite{dose-scaling09} and the number of pixels per diffraction pattern is $\sim n^2$, the number of diffraction patterns $m$ is then $\sim n^2$. This is fully consistent with the full‐sphere tomography (vs. few-axis tilt)  in,  e.g. single‐particle cryo‐EM or  X-ray diffractive imaging.

The surprising effect of  optimal dose fractionation near SNR =1 is not limited to 1-bit diffraction tomography. 
Similar effect persists in diffraction tomography with full intensity measurements (albeit in a less pronounced way). This is in part due to the 1-bit method is the state-of-the-art initialization method for iterative phase retrieval \cite{acta-phase}.

 Our conclusions regarding optimal fractionation (specifically, the empirical optimality at SNR = 1 per pixel) and the observed correlation between the spectral gap and eigenvector correlations are supported by numerical experiments rather than theoretical guarantees.  The existence and size of the spectral gap primarily govern the convergence and rate of the power iteration algorithm. By themselves, these quantities do not guarantee reconstruction accuracy or optimal dose allocation. 
These numerical findings point to a deeper underlying structure, but a rigorous analysis connecting spectral properties to image fidelity and dose efficiency remains an open and important direction for future work.

We have preliminarily shown that the proposed scheme performs equally well with 2-phase and 4-phase random masks. This has  potentially important practical implications such as easier fabrication at X-ray wavelengths, higher optical efficiency and faster/cheaper modulators for the optical spectrum \cite{kinoform}. 

 To what extent a continuous phase mask is replaceable by a quantized-phase mask requires further investigation. Randomness beating mask quantization depth holds only with a unit-variance phase mask, the maximum variance of any mask phase. This is a key to maintain  dose economy while reducing mask phase quantization. Notably  “dose economy” in imaging is not set by the total number of photons alone; it is set by how those photons are distributed over the resolution elements that carry information.
A random phase mask of less than unit variance such as the narrow-band binary phasor of
$\{\pm \pi/6\}$ of equal probability, while fabrication is easier -- shallower etch, higher throughput -- but information efficiency collapses, as many pixels get starved of counts while a few central pixels get most photons, degrading the quality of  reconstruction.

In this work, we have presented simulation results for a limited class of objects -- specifically, the ensemble of randomly phased phantoms. While incorporating additional prior information, such as real-valuedness, sparsity, or support constraints, is straightforward in principle -- e.g., by integrating alternating projection steps into the inverse power method -- the optimal strategy for doing so remains unclear. In particular, it is not yet known how best to balance these constraints with the spectral updates to achieve improved reconstruction accuracy and convergence robustness across broader object classes. Exploring these design choices is an important direction for future work.

Another limitation on the present work is the treatment of noise. Only the Poisson noise, which is fundamental in any microscopy, is considered here. The background and dark-current noises as well as the calibration errors are neglected. More important, we assume a {\em discrete} forward scattering model (Appendix A) and use it as the backward reverse model, bypassing  model-mismatch errors such as discretization errors and multiple scattering distortion.  An interesting question for future investigation  is how much of the optimal fractionation of dose at $NSR=1$ is attributable to the Poissonian nature of noise. 

Future directions could explore the incorporation of adaptive thresholding strategies, dynamic coding schemes, and hybrid measurement models to further enhance reconstruction quality and extend the applicability of this promising method.

\begin{appendix}

\section{Discrete tomography}\label{sec:discrete}

For the convenience of the reader and the completeness of presentation, we recall the discrete framework of diffraction tomography developed in \cite{Born-tomo,3D-phasing} and make connection to  the matrix setting of Theorem \ref{thm0}.

The discrete framework is a discretization of the continuum model often called the high-frequency Born approximation used in cryo-electron microscopy \cite{Frank06} and X-ray diffractive imaging \cite{Jacobsen}. See \cite{3D-phasing} for  the closely related  high-frequency Rytov approximation.  

In the continuum setting, the scattered exit waves modeled by the X-ray transform of $f$ undergo propagation and then are detected by detectors at far-field. 

 In a diffraction-limited imaging system with the wavelength $\lambda$,  the resolution length is roughly $\lambda/2$. Adopting a discrete framework for the X-ray transform, 
we set the grid spacing $\lambda/2$.  For simplicity, let  $\lambda=2$ so the grid spacing is 1. 
 
Let $\lb k,l\rb$ denote the integers
between and including the integers $k$ and $l$. 
Let  $O_n$ denote the class of discrete complex-valued objects
\beq
\label{1.1}
O_n:=\{f: f(i,j,k)\in \IC,  (i,j,k)\in \IZ^3_n; f(i,j,k)=0,  (i,j,k)\not\in \IZ^3_n\}
 \eeq
 where
\beq
\label{1.2}
\IZ_n&=&\left\{\begin{array}{lll}
\lb-n/2, n/2-1\rb && \mbox{if $n$ is an even integer;}\\
 \lb-(n-1)/2, (n-1)/2\rb && \mbox{if $n$ is an odd integer.}
 \end{array}\rt.
\eeq
To fix the idea, we consider the case of odd $n$ in the paper.

Following  the framework in \cite{discrete-X} we discretize the projection geometry as follows. 

We define three families of line segments, the $x_1$-lines, $x_2$-lines, and $x_3$-lines. 
The $x_1$-lines, denoted by $\ell_{(1,\alpha,\beta)}(c_1,c_2)$ with $ |\alpha|, |\beta|<1$,  are  defined by
\beq
\label{1.3'}
\ell_{(1,\alpha,\beta)}(c_1,c_2): \lt[\begin{matrix}
x_2\\
x_3\end{matrix}\rt]=\lt[\begin{matrix}\alpha x_1+c_1\\  \beta x_1+c_2
\end{matrix}\rt] && c_1, c_2\in \IZ_{2n-1}, \quad x_1\in \IZ_n
\eeq
 To avoid wraparound, we can zero-pad $f$ in a larger lattice $\IZ^3_p$ with $p\ge 2n-1.$ This is particularly important when it comes to define the ray transform by a line sum (cf. \eqref{2.8}-\eqref{2.10}) since wrap-around is unphysical. To fixed the idea, we define the object space that we shall work with:
 \beq
 \cX:= \{f\in O_n\;  \hbox{with the domain restricted to}  \:\IZ_{2n-1}^3\}.
 \eeq

Similarly, a $x_2$-line and a $x_3$-line are defined as
\beq
\ell_{(\alpha,1,\beta)}(c_1,c_2):  \lt[\begin{matrix}
x_1\\
x_3\end{matrix}\rt]=\lt[\begin{matrix}\alpha x_2+c_1\\  \beta x_2+c_2
\end{matrix}\rt]
 && c_1, c_2\in \IZ_{2n-1}, \quad x_2\in \IZ_n,\\
\ell_{(\alpha,\beta,1)}(c_1,c_2): 
 \lt[\begin{matrix}
x_1\\
x_2\end{matrix}\rt]=\lt[\begin{matrix}\alpha x_3+c_1\\  \beta x_3+c_2
\end{matrix}\rt] && c_1, c_2\in \IZ_{2n-1}, \quad x_3\in \IZ_n, 
\eeq
with $ |\alpha|, |\beta|<1$.

Let $\widetilde f$ be the continuous interpolation of $f$
given by
\beq\label{1.5}
  \widetilde f(x_1,x_2,x_3)&=&\sum_{i\in \IZ_n} \sum_{j\in \IZ_n}\sum_{k\in \IZ_n} f(i,j,k) D_p(x_1-i)D_p(x_2-j)D_p(x_3-k),\eeq
where 
 $D_p$ is the $p$-periodic Dirichlet kernel given by
\beq
D_p(t)={1\over p} \sum_{l\in \IZ_{p}} e^{\im 2\pi l t/p}
&=&\lt\{\begin{matrix}1,& t=mp,\quad m\in \IZ\\
{\sin{(\pi t)}\over p\sin{(\pi t/p)}},& \mbox{else}.
\end{matrix}\rt.\label{Dir}
\eeq
In particular, $[D_p(i-j)]_{i,j\in \IZ_p}$ is the $p\times p$ identity matrix. 
Because $D_p$ is a continuous $p$-periodic function, so is $\widetilde f$. 
However, we will only use the restriction of $\widetilde f$ to one period cell  $[-(p-1)/2, (p-1)/2]^3$ to define the discrete projections and  avoid the wraparound effect.

We define the discrete projections as the following line sums
\beq
 \label{2.8} 
 f_{(1,\alpha,\beta)}(c_1,c_2)&=& \sum _{i\in \IZ_n} \widetilde f(i,\alpha i+c_1,\beta i+c_2),\\
f_{(\alpha,1,\beta)}(c_1,c_2)&=& \sum _{j\in\IZ_n} \widetilde f (\alpha j+c_1,j, \beta j+c_2)\label{2.9}\\
f_{(\alpha,\beta,1)}(c_1,c_2)&=& \sum _{k\in \IZ_n} \widetilde f(\alpha k+c_1,\beta k+c_2, k)\label{2.10}
\eeq
with $ c_1, c_2\in \IZ_{2n-1}$.

The 3D discrete Fourier transform $\cF f$ of the object $f\in \cX$,  is given by
\beq
\cF f(\xi,\eta,\zeta)&=&p^{-3/2}\sum_{i,j,k} f(i,j,k)e^{-\im 2\pi(\xi i+\eta j+ \zeta k)/p}
\eeq
where
the range of the Fourier variables $\xi,\eta,\zeta$ can be extended from the discrete interval $\IZ_p$ to the continuum  $[-(p-1)/2, (p-1)/2]$. Note that by definition, $\widehat f$ is a $p$-periodic band-limited function. 
When there is no risk of confusion, we shall denote the {\em full} DFT for 
1D and 2D functions $g$ by  $\cF g$ and use the shorthand notation $\widehat g=\cF g$.

For $z$-lines, let $\bt=(\bt',1)$ with  $\bt'=(\alpha, \beta)$ denote the direction vectors.  Let $\mbf_\bt$ denote the discrete ray transform 
   \beq\label{fbeta}
\mbf_\bt(c):= \sum_{j \in \IZ_n}  \mbf \left(\bt'  j + c, j_3\right ),\quad \bc=(c_1, c_2)\in \IZ_p^2.
\eeq
Let $\cT$ denote the set of directions $\bt$ employed in the 3D diffraction measurement with a coded aperture (Figure \ref{fig1}). To fix the idea, {  let $p=2n-1$} in \eqref{Dir}.

Let $\mu$ be  the mask function  and $ f_\bt$ the object projection in the direction $\bt$. 

The Fraunhofer diffraction  of the masked scattered wave to the far-field detector  in Figure \ref{fig1}  is modeled by 
the  Fourier transform as
 \beq
\cF (\mbf_\bt \odot\mu)
&=&p^{-1}\widehat  \mbf_\bt \Conv \widehat  \mu(\bn):=p^{-1}\sum_{\bn'\in \IZ_p^2} \widehat  \mbf_\bt(\bn')\widehat  \mu(\bn-\bn')\label{28}
\eeq
with the resulting  coded diffraction patterns
\beq
\label{born-pattern}
\Big\{ |\cF (\mu\odot f_\bt) |^2,\quad\bt\in \cT\Big\}. 
 \eeq

 Given the randomness assumption  and  the asymptotic nature of the reconstruction accuracy guaranteed by Theorem \ref{thm0}, it is expected that 1-bit phase retrieval with coded diffraction patterns is similarly asymptotic in the sense that the quality  gradually increases as the numbers of diffraction patterns and random masks increase.

We  decompose the signal process 
  into two steps: 
  \beq\label{30}
 \cX\stackrel{\cR}{\longrightarrow} \cX_\cT:=\{(\mbf_\bt)_{\bt\in \cT}:  \mbf\in \cX\}\stackrel{\cQ}{\longrightarrow} 
 \{ { p^{-1} }(\widehat  \mbf_\bt\Conv \widehat \mu)_{\bt\in \cT}: \mbf\in \cX\}
  \eeq
   with  the ``collective" ray-transform $\cR$ and the masked 2D DFT $ \cQ$.
   Now we can write the  noiseless signal model
 as  $b=|\cA \mbf|$, with the measurement matrix $
 \cA:= \cQ\cR.$ The object domain projection  $\cP_\cX$ in Algorithm 1 is now carried out  in the transform domain  as   \bea\label{30'}
 \cP=\cA\cA^\dagger={\cQ\cR\cR^\dagger\cQ^*} \eea
 where $\cA^\dagger$ and $\cR^\dagger$ are the pseudo-inverses of $\cA$ and $\cR$ on $\cX$, respectively. 
    
For simplicity, let us assume that $\mu$ is a phase mask, i.e. $\mu=\exp[\im\phi], \phi\in \IR.$ Then for each $\bt$ the mapping \eqref{28} of $f_\bt$ is an isometry, modulo a scale factor, while 
the tomographic mapping $\cA$ defined for the 3D object $f$  is decisively not.  

To avoid  the missing cone problem in tomography, we consider
$m=3\rho n$  evenly distributed  random directions 
 \beq
\label{tset} \cT&=&\{ \bt_i=(1,\alpha_i,\beta_i) \}_{i=1}^{\rho n}\cup
\{ \bt_i=(\alpha_i,1, \beta_i) \}_{i=1+\rho n}^{2\rho n}\cup
\{ \bt_i=(\alpha_i, \beta_i,1) \}_{i=2\rho n+1}^{3\rho n}\\
&&\hbox{with $\alpha_i,\beta_i,i=1,\cdots,3\rho n,$ randomly chosen from
$(-1,1)$}\nn
\eeq
with the adjustable  parameter $\rho>0$.

 \section{Pseudo-inverse of $\cA$}  
 We adopt the following notation: $\cF_i, i=1,2,3,$ denotes the 1D DFT  in the $i$th variable over $\IZ_p$; $\cF_{ij},i,j=1,2,3,$ denotes the 2D DFT in the $i$- and $j$-th variables over $\IZ_p^2$; $\cF$ denotes the 3D DFT over $\IZ_p^3$. { Let $\cZ$ denote the zero-padding operator, $\cZ:\IC^{n\times n\times n}\to \IC^{p\times p\times p}$. Then 
 its adjoint  $\cZ^*$ is one projection, $\cZ^*:\IC^{p\times p\times p}\to \IC^{n\times n\times n}$.}
 
Since  $\cQ^*\cQ=I$, $A^\dagger =\cR^\dagger \cQ^*$  and hence the computation of $\cA^\dagger$  hinges on efficient implementation of $\cR^\dagger$. 

 First consider the case of projections along $z$-lines only.  
We have
   \beq
\mbf_\bt(\bc)&= &\sum_{j \in \IZ_n}  \mbf \left(\bt'   j + \bc, j\right )\nn\\
&=&p^{-2}\sum_{\bk}\sum_{j \in \IZ_n}\sum_{\bj} f(\bj,j)e^{\im2\pi (\bt' j+\bc-\bj)\cdot\bk/p}\nn\\
&=&p^{-2}\sum_{\bk} e^{\im 2\pi \bk \cdot \bc/p}\sum_{j \in \IZ_n} e^{\im2\pi  j\bt'\cdot\bk/p}
\sum_{\bj} f(\bj, j) e^{-2\pi \bk\cdot \bj/p}\nn \\
&=& p^{-1}\sum_{\bk} e^{\im 2\pi \bk \cdot \bc/p}\sum_{j\in \IZ_p} e^{\im2\pi  j\bt'\cdot\bk/p}\cF_{12}  (\cZ \mbf) (\bk, j).\label{54}
\eeq
For each $g=(g_j)\in \IC^p$, let $\Phi g $ be the quasi-periodic function given by 
  \beq \label{phi} \Phi g (\zeta)
   =\sum_{j\in \IZ^p} e^{\im2\pi  j\zeta/p} g_{j},\quad  g=(g_j)\in \IC^p. 
  \eeq
For given $\bk$ and $\cT=\{\bt_l=(\bt'_l, 1):l=1,...,m\}$ define the matrix $\Psi$ by
  \beq
  \cK g
  =\Big(\Phi g(\bt'_l\cdot \bk)\Big)_{l=1}^m,\quad g=(g_j)\in \IC^p
           \eeq
   whose adjoint is then given by
 \beq\label{eq_cK}
\cK^* h
&=&
\lt(\sum_{l=1}^m e^{-\i 2\pi  j\bt'_l \cdot    \bk/p} h_{l}\rt)_{j\in \IZ_p},\quad h:=(h_l)\in \IC^m. 
 \eeq

By \eqref{54} $\cR$ acting on $\cX$ can be decomposed as 
  \beq\label{eq103_}
{\cR=  \cF^*_{12}  \cK  \cF_{12}\cZ. }
\eeq

 Since
 \[
 \cR^\dagger=
( \cR^* \cR)^{-1} \cR^*, \; 
 \cR^* \cR=\cZ^* \cF^*_{12}  \cK^* \cK \cF_{12}\cZ
 \]
  the key to computing  $\cR^\dagger$ is the inversion of $\cR^*\cR$. Indeed, any
  \beq
  v:=\cR^\dagger h,\quad h\in\cX_\cT:=\{(\mbf_\bt)_{\bt\in \cT}:   \mbf\in \cX\}
  \eeq
satisfy  the normal equation
  \beq\label{eq_18}
 \cZ^* \cK^* \cK \cF_{12}\cZ  v&= &\cZ^* \cK^*  \cF_{12} h
 \eeq
which  is  a Toeplitz system in view of the identity  
\beq\label{eq_6_2}
\cK^* \cK  \cF_{12} \cZ v(\bk,j)= \sum_{i \in \IZ_p}  \sum_{l=1}^m e^{-\i 2\pi (j- i )\bt'_l \cdot    \bk/p}\cF_{12} \cZ v(\bk,i). \eeq

Observe that   $\cF_{12} \cZ v(\cdot,i)=0$ for all $i\not\in \IZ_n$ and $i-j\in \IZ_p$ for $i,j\in \IZ_n$  (with $p= 2n-1$). A Toeplitz matrix can be embedded into a circulant matrix, and the associated matrix-vector product can be implemented efficiently by FFTs\cite{Str86}\cite{CJ07
}. Hence  $ \cK^*\cK$ can be implemented  as  a $p\times p$  circulant  matrix when acting on vectors supported on $\IZ_n$:
\beq
\cK^*\cK&=& \lt(\sum_{l=1}^m  e^{-\i 2\pi   (j-i)\bt'_l  \cdot \bk /p}\rt)_{i,j\in \IZ_p}.\label{74}
\eeq
Consequently, $\cZ^* \cK^*\cK \cZ$ can be efficiently inverted  by diagonalizing with FFT. 
 This fast matrix-vector product  motivates the adoption  of conjugate gradient (CG) methods for computing $(\cR^* \cR)^{-1}$.

Indeed, let 
\beq 
w(\bk, k)&:=&\sum_{l=1}^m  e^{-\i 2\pi  k\bt'_l  \cdot \bk /p},\quad (\bk,k)\in \IZ_p^3. 
 \label{eq_u}
 \eeq
  By \eqref{eq_6_2} and the discrete convolution theorem
  \beq
 \cK^* \cK \cF_{12}\cZ v=w\ast_3 \cF_{12}\cZ v=\sqrt{p}
  \cF^*_{3} ( \cF_{3} w\odot \cF_{3} \cF_{12} \cZ v)
 \eeq
 where $\Conv_3$ denotes the discrete convolution over the third variable. 
Hence 
 \beq\label{eq_18_0}
   \cF^*_{12}   \cK^* \cK \cF_{12} \cZ v
 &=&  \sqrt{p} \cF^* (\cF_3 w \odot \cF \cZ v). 
  \eeq

 By  (\ref{eq_18}),  any $v=\cR^\dagger \mbh, h\in \cX_\cT$
 satisfy    \beq\label{CG_eq}
 \cR^* \cR v=\sqrt{p}\cZ^* \cF^*(\cF_3 w\odot \cF \cZ v)= \cZ^* \cF^*_{12}
 \cK^* \cF_{12} \mbh,\quad v\in \cX, 
 \eeq
 which is then solved by CG in the space $\cX$. 
 
In summary, we have $A^\dagger =\cR^\dagger \cQ^*$ where $\cR^\dagger$ is obtained by solving eq. \eqref{CG_eq} with $w$ given by \eqref{eq_u}. 

 \section{$\cA^\dagger$ with $x-$, $y-$ and $z-$lines}\label{general}
 Let 
\beq
 \cT_1&=&\{ \bt_i=(1,\alpha_i,\beta_i)\in \IR^3: i=1,\cdots, m_1 \}\\
\cT_2&=&
\{ \bt_i=(\alpha_i,1, \beta_i)\in \IR^3:  i=m_1+1,\cdots, m_1+m_2 \}\\
\cT_3&=&
\{ \bt_i=(\alpha_i, \beta_i,1)\in \IR^3:  i=m_1+m_2+1,\cdots, m_1+m_2+m_3 \}
\eeq
and $\cT=\cT_1\cup \cT_2\cup \cT_3$ be the total set of $m=m_1+m_2+m_3$ projection directions.

Let $\cA_j=\cQ_j\cR_j$ be the partial measurement matrix and decomposition corresponding to $\cT_j, j=1,2,3$. Let $\cA=[\cA^T_1,\cA^T_2,\cA^T_3]^T$ be the full measurement matrix which can be decomposed as $\cA=\cQ\cR$ where
$\cQ=\diag[\cQ_1,\cQ_2,\cQ_3]$ is the collective masked Fourier transform and $\cR=[\cR^T_1,\cR^T_2,\cR^T_3]^T$
the collective ray transform. 

To compute $\cA^\dagger=\cR^\dagger\cQ^\dagger$ the key is  $\cR^\dagger.$
Since $\cR^\dagger=(\cR^*\cR)^{-1}\cR^*$, we have
\beq
\label{83}
\cR^\dagger=(\cR_1^*\cR_1+\cR_2^*\cR_2+\cR_3^*\cR_3)^{-1} \cR^*.
\eeq
Analogous to \eqref{eq103_}, we have
\beqn
\cR_1&=&  \cF^*_{23}  \cK_1  \cF_{23}\cZ \\
\cR_2&=  &\cF^*_{13}  \cK_2  \cF_{13}\cZ\\
\cR_3&= & \cF^*_{12}  \cK_3  \cF_{12}\cZ. 
\eeqn
where for any $ g=(g_j)\in \IC^p$
\beqn
 \cK_1 g
  &=&\Big(\Phi g(\alpha_lk_2+\beta_lk_3)\Big)_{l=1}^{m_1}\\
 \cK_2g
  &=&\Big(\Phi g(\alpha_lk_1+\beta_lk_3)\Big)_{l=m_1+1}^{m_1+m_2}\\
  \cK_3 g
  &=&\Big(\Phi g(\alpha_lk_1+\beta_lk_2)\Big)_{l=m_1+m_2+1}^{m}
\eeqn
with the transform $\Phi$ defined by \eqref{phi}. Hence by \eqref{83}, the key to $\cR^\dagger$ is to invert  the operator
\beq
\cR^* \cR=\cZ^* (\cF^*_{23} \cK_1^* \cK_1  \cF_{23}+\cF^*_{13} \cK_2^* \cK_2 \cF_{13}+\cF^*_{12}  \cK_3^*\cK_3  \cF_{12})\cZ.
\eeq
Observe that from  the discrete convolution theorem, we have
 \beq\label{CG3_core}
\cR^* \cR v= \sqrt{p} \cZ^* \cF^*\Big(u\odot \cF \cZ v\Big),\quad u:=\sum_{l=1}^3\cF_l w_l, 
 \eeq
 where $\cR^*=[\overline{\cR_1},\overline{\cR_2},\overline{\cR_3}]$ (over-line denotes complex conjugation) and 
\beq 
w_1(k_1,k_2,k_3)&=&\sum_{l=1}^{m_1}  e^{-\i 2\pi  k_1(k_2\alpha_l+k_3\beta_l)/p},
 \label{eq_u1}\\
w_2(k_1,k_2,k_3)&=&\sum_{l=m_1+1}^{m_1+m_2}  e^{-\i 2\pi  k_2(k_1\alpha_l+k_3\beta_l)/p},
 \label{eq_u2}\\
 w_3(k_1,k_2,k_3)&=&\sum_{l=m_1+m_2+1}^{m}  e^{-\i 2\pi  k_3(k_1\alpha_l+k_2\beta_l)/p}. 
 \label{eq_u3}
 \eeq 
 Analogous to \eqref{CG_eq},
we  solve for $v=\cR^\dagger h$ from the equation:
 \beq\label{CG3_eq}
\sqrt{p} \cZ^* \cF^*\Big(u\odot \cF \cZ v\Big)= \cR^*\mbh. 
 \eeq

  \section{Preconditioners}
  We shall employ 
  preconditioned conjugate gradient methods (PCG) to 
   solve  (\ref{eq8}), where
  the preconditioner is   a
  circulant preconditioner related to   $\cR^* \cQ^* \cQ\cR=\cR^* \cR$.
  Suppose that all entries of $u$ in (\ref{CG3_core}) are positive.
 One natural preconditioner  
 is  Strang's circulant  preconditioner 
 $\cM$ satisfying 
\beq
\cM^{-1} v=  p^{-1/2} \cZ^* \cF^* \left\{u^{-1} \odot \cF( \cZ v)\right\},\; v\in \cX.
\eeq
Unfortunately, 
even though $\cR^*\cR$ is positive  definite, 
 these entries in $u$  vary widely, and some of them are negative. 
To alleviate the difficulty,  we introduce  a minor but necessary modification:
 \beq\label{Minv}
\cM^{-1} v=  p^{-1/2} \cZ^* \cF^* \left\{u^{-1} \odot \cF( \cZ v)\odot  (u \ge \epsilon)\right\},
\eeq 
where $\epsilon$ is chosen  to be a small positive scalar with $\cM\succ 0$.
 In  simulations,  $\epsilon=0.1$ is used.

\section{Krylov subspace methods to compute the second eigenvector} \label{sec:1.3}
For a large Hermitian matrix,
the Krylov subspace method is one efficient approach to compute eigenvectors of the extreme values.\cite{Saa96, GY02, GVL13}  
For instance, we can  employ the block Lanczos algorithm in 10.3.6 in \cite{GVL13} to compute a few extreme  eigenvectors of ${\cS }^{-1}\cS_{\omega} $.
The following  illustrates one algorithm that can efficiently compute the first two dominant eigenvectors of ${\cS }^{-1}\cS_{\omega} $. 

 Let $l$ be an integer with  $l\ge 2$. 
Let   $\mbf$ denote an $N\times l$ matrix,  consisting of  the  first $l$ dominant eigenvectors of $\cS^{-1}\cS_{\omega} $. Then 
 $\mbf$ can be regarded as a solution to 
 \beq\label{eq23}
\max_{\mbf}\left\{ \cG(\mbf):= \langle\mbf, \cS_{\omega} \mbf \rangle\right\}, \textrm{ subject to }\mbf^* \cS\mbf =I_l.
\eeq

Apply the Rayleigh–Ritz method to get    Ritz approximation of (\ref{eq23}). 
Construct a Krylov subspace spanned by the column space of some full rank matrix $X$. 
Consider the low dimensional approximate $\mbf$ in (\ref{eq23}), 
 $\mbf\approx X\beta$. Then 
 $\beta$ is a maximizer to 
 \beq\label{eq234'}
\max_{\beta}\langle X\beta, \cS_{\omega} X \beta\rangle
\textrm{
subject to 
} \beta^* X^* \cS X \beta=I_l.
\eeq
The following illustrates the determination of $\beta$.

\begin{prop} \label{prop1.1}Let $X$ be some $N \times k$ matrix with rank $k\ge l$. Introduce $C_{S_{\omega}}:=X^* \cS_{\omega}  X$,  $C_S:=X^* \cS X$. Suppose  $\mbf=X\beta$ for some $\beta \in \IC^{k\times l}$. Then $\beta$ satisfies  the maximization problem:
 \beq\label{eq234}
\max_{\beta}\langle X\beta, \cS_{\omega} X \beta\rangle=\max_{\beta}\langle \beta, C_{S_{\omega}}  \beta\rangle
\eeq
subject to 
\beq\label{eq235}
 \beta^* X^* \cS X \beta=\beta^* C_{S}  \beta=I_l.
\eeq
The optimal solution is 
  $\beta=C_{S} ^{-1/2} \alpha$, where  $\alpha$ is 
   the matrix whose columns are the first $l$ dominant eigenvectors of $ C_{S} ^{-1/2} C_{S_{\omega}}  C_{S} ^{-1/2}$.
\end{prop}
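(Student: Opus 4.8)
The plan is to turn the constrained maximization \eqref{eq234}--\eqref{eq235} into an ordinary Hermitian eigenvalue problem by a ``whitening'' change of variables and then quote Ky Fan's trace‑maximization principle.

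\textbf{Step 1 (positivity of the compressions).} First I would record that $C_{S_{\omega}}=X^*\cS_{\omega}X$ is Hermitian positive semidefinite, since $\omega$ is a real indicator vector and $\cS_{\omega}=\cR^*\cQ^*\diag(\omega)\cQ\cR\succeq 0$. Likewise $C_S=X^*\cS X$ is Hermitian; because $\cS=\cR^*\cR$ is assumed nonsingular, $\cR$ is injective, and since $X$ has full column rank $k$ the composition $\cR X$ is injective, so $C_S=(\cR X)^*(\cR X)\succ 0$. Consequently $C_S$ admits a unique Hermitian positive definite square root $C_S^{1/2}$, with inverse $C_S^{-1/2}$, both commuting with $C_S$.

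\textbf{Step 2 (whitening).} Next I would set $\gamma:=C_S^{1/2}\beta$, i.e. $\beta=C_S^{-1/2}\gamma$, for $\gamma\in\IC^{k\times l}$. Then the constraint \eqref{eq235} becomes $\gamma^*\gamma=I_l$ (orthonormal columns), and the objective becomes $\langle\beta,C_{S_{\omega}}\beta\rangle=\operatorname{tr}(\gamma^*M\gamma)$ with $M:=C_S^{-1/2}C_{S_{\omega}}C_S^{-1/2}$ Hermitian positive semidefinite. Thus \eqref{eq234}--\eqref{eq235} is equivalent to maximizing $\operatorname{tr}(\gamma^*M\gamma)$ over all $k\times l$ matrices $\gamma$ with $\gamma^*\gamma=I_l$.

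\textbf{Step 3 (Ky Fan).} By Ky Fan's maximum principle, for Hermitian $M$ one has $\max_{\gamma^*\gamma=I_l}\operatorname{tr}(\gamma^*M\gamma)$ equal to the sum of the $l$ largest eigenvalues of $M$, with a maximizer given by the matrix $\alpha$ whose columns are an orthonormal family of eigenvectors of $M$ for those $l$ largest eigenvalues. Undoing the substitution yields $\beta=C_S^{-1/2}\alpha$, the claimed solution. I would add the caveat that the maximizer is unique only up to right multiplication by an $l\times l$ unitary (and up to the choice of invariant subspace if the $l$‑th eigenvalue is repeated), so the statement should be read as exhibiting \emph{one} optimal $\beta$.

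\textbf{Main obstacle.} Everything except Step 3 is bookkeeping with the map $C_S^{\pm1/2}$; the single non‑elementary ingredient is Ky Fan's principle. The other point worth pinning down is the meaning of $\langle\cdot,\cdot\rangle$ on matrix arguments: read as the trace inner product $\langle A,B\rangle=\operatorname{tr}(A^*B)$, it makes \eqref{eq234'} the trace functional above; for $l=1$ this reduces to the Courant--Fischer characterization of the top eigenvalue, and for $l\ge 2$ it is exactly the Ky Fan version, so the same three steps cover both cases.
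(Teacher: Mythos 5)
Your proposal is correct and follows essentially the same route as the paper: the substitution $\beta=C_S^{-1/2}\alpha$ (your ``whitening'') reduces the constrained problem to maximizing $\langle\alpha, C_S^{-1/2}C_{S_\omega}C_S^{-1/2}\alpha\rangle$ over $\alpha^*\alpha=I_l$, whose solution is the span of the $l$ dominant eigenvectors. The only difference is that the paper dispatches the final step with ``obviously'' while you justify it by Ky Fan's principle and explicitly verify $C_S\succ 0$ — a welcome but not substantively different addition.
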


\begin{proof}
First, from (\ref{eq235}), we 
 express
  $\beta=C_{S} ^{-1/2} \alpha$ for some unitary  $k\times l$ matrix $\alpha$, i.e.,  $\alpha^* \alpha=I_l$. Second, from (\ref{eq234}),  $\alpha$ is a maximizer of 
\beq
\max_{\alpha }\langle C_{S} ^{-1/2}\alpha, C_{S_{\omega}}  C_{S} ^{-1/2}\alpha\rangle.
\eeq
Obviously, the optimal choice on $\alpha$ is the matrix whose columns are the first $l$ dominant eigenvectors of $ C_{S} ^{-1/2} C_{S_{\omega}}  C_{S} ^{-1/2}$.
\end{proof}

As noted in section~\ref{sec:1.1}, the shifted inverse iteration converges faster than the power iteration.
 We shall apply the operator  $(I-\cS^{-1}\cS_{\omega} )^{-1}$ to reach the invariant subspace of $\cS^{-1} \cS_{\omega} $, i.e., $span\{f_1,\ldots, f_l\}$.  This constructed subspace is known as the shift-and-invert Krylov subspace\cite{Eshof2006}, 
 which  was introduced 
 for the calculation of 
 the matrix exponential acting on a vector.
 
In summary, 
we repeat the two-step iterative procedure to reach a maximizer of $\cG(\mbf)$ in  (\ref{eq23}):\begin{itemize}
\item Let $\mbf=[f_1,\ldots, f_l]\in \IC^{n\times l}$. Form  $X\in \IC^{n\times lp}$, whose columns are  
 \beq
 \cup_{j=1,\ldots, l} \{
 f_j,  ((\cS-\cS_{\omega} )^{-1}\cS)f_j, \ldots, ((\cS-\cS_{\omega} )^{-1}\cS)^{p-1}f_j \}.\eeq

\item  Update  $f$ with  $\mbf=[f_1,\ldots, f_l]=X\beta$, where $\beta$ is determined from 
Prop.~\ref{prop1.1}.
\end{itemize}

\section{structural similarity index measure (SSIM)}\label{sec:ssim}
In its conventional formulation for real-valued images, the structural similarity index measure (SSIM) is defined in terms of sample mean $\mu$ and sample (co)-variance $\sigma$ as
\begin{equation}
\text{SSIM}(f,g) = \frac{2\mu_f \mu_g + c_1}{\mu_f^2 + \mu_g^2 + c_1}\cdot\frac{2\sigma_{fg}+c_2}{\sigma_f^2 + \sigma_g^2 + c_2},\label{ssim}
\end{equation}
where $c_1$ and $c_2$ are small constants to avoid instabilities.

For real‐valued images, SSIM compares local statistics—mean, variance, and covariance—to capture luminance, contrast, and structure differences and better correlates with human visual perception than MSE or PSNR.

For imaging problems, however, the luminance component (the first factor in \eqref{ssim}) is not as relevant as the other component. 
For simplicity, we will neglect the luminance component and set $c_2=0$, i.e. we will adopt the simplified expression
\beqn
\text{SSIM}(f,g)&=&\frac{2\sigma_{fg}}{\sigma_f^2 + \sigma_g^2}
\eeqn

For a complex valued object and reconstruction, we define the structural comparison function as
\beq\label{ssim2}
\text{SSIM}(f,g)&=&\frac{2|\sigma_{fg}|}{\sigma_f^2 + \sigma_g^2}, \quad 
\sigma_{fg} = \frac{1}{N-1} \sum_{i=1}^{N} \left( f_i - \mu_f \right) \left( g_i - \mu_g\right)^*
\eeq
where $N$ is the total number of pixels.

\end{appendix}

\bibliographystyle{alpha}
\bibliography{Tomographic}

\section*{Acknowledgement}
The research of PC is supported in part by grants
 110-2115-M-005 -007 -MY3
and
 111-2918-I-005 -002  from the National Science and Technology Council, Taiwan.
 The research of AF is supported in part by by the Simons Foundation grant FDN 2019-24 and the US National Science Foundation grant CCF-1934568. We thank to National Center for High-performance Computing (NCHC), Taiwan, for providing computational resources.

\end{document}